\par\addvspace{\@bls \@plus 0.5\@bls \@minus 0.1\@bls}\noindent
\par\addvspace{\@bls \@plus 0.5\@bls \@minus 0.1\@bls}}
\newcommand{\IFTHEN}[3][default]{\ALC@it\algorithmicif\ #2\
  \algorithmicthen\ #3\
  \ifthenelse{\boolean{ALC@noend}}{}{\algorithmicendif\ } \ALC@com{#1}}
\newcommand{\IFTHENELSE}[3]{\ALC@it\algorithmicif\ #1\
  \algorithmicthen\ #2\
  \algorithmicelse\ #3\
  \ifthenelse{\boolean{ALC@noend}}{}{\algorithmicendif\ } }
\newcommand{\IFTHENEND}[2]{\ALC@it\algorithmicif\ #1\
  \algorithmicthen\ #2\ \algorithmicendif}
\newcommand{\supbr}[1]{{}^{[#1]}}
\newcommand{\supabr}[1]{{}^{\langle#1\rangle}}
\newcommand{\swplus}{\text{$+$}} %
\newlength{\colsp}\setlength{\colsp}{0.2em} %
\newlength{\spaceafterprotocol}
\DeclareMathOperator{\rank}{rank}
\newtheorem{theorem}{Theorem}
\newtheorem{corollary}{Corollary}
\newtheorem{lemma}{Lemma}
\newtheorem{proposition}{Proposition}
\newcommand{\F}{\ensuremath{\mathbb{F}}}
\newcommand{\FF}{\ensuremath{\mathbb{F}}}%
\newcommand{\Z}{\ensuremath{\mathbb{Z}}}
\newcommand{\Q}{\ensuremath{\mathbb{Q}}}
\newcommand{\Primes}{\ensuremath{\mathbb{P}}}
\newcommand{\RPM}[1]{\ensuremath{\mathcal{R}_{#1}}\xspace}
\newcommand{\leftinv}[1]{\ensuremath{{#1}^{-1}_{\text{left}}}\xspace}
\newcommand{\SO}[1]{\ensuremath{\widetilde{O}({#1})}\xspace}
\newcommand{\GO}[1]{\ensuremath{O ({#1})}\xspace}
\newcommand{\sample}[1]{\ensuremath{\xhookleftarrow{\text{u.i.d.}} {#1}}\xspace}
\LetLtxMacro\orgvdots\vdots
\LetLtxMacro\orgddots\ddots
\DeclareRobustCommand\vdots{%
  \mathpalette\@vdots{}%
}
\newcommand*{\@vdots}[2]{%
  \sbox0{$#1\cdotp\cdotp\cdotp\m@th$}%
  \sbox2{$#1.\m@th$}%
  \vbox{%
    \dimen@=\wd0 %
    \advance\dimen@ -3\ht2 %
    \kern.5\dimen@
    \dimen@=\wd2 %
    \advance\dimen@ -\ht2 %
    \dimen2=\wd0 %
    \advance\dimen2 -\dimen@
    \vbox to \dimen2{%
      \offinterlineskip
      \copy2 \vfill\copy2 \vfill\copy2 %
    }%
  }%
}
\DeclareRobustCommand\ddots{%
  \mathinner{%
    \mathpalette\@ddots{}%
    \mkern\thinmuskip
  }%
}
\newcommand*{\@ddots}[2]{%
  \sbox0{$#1\cdotp\cdotp\cdotp\m@th$}%
  \sbox2{$#1.\m@th$}%
  \vbox{%
    \dimen@=\wd0 %
    \advance\dimen@ -3\ht2 %
    \kern.5\dimen@
    \dimen@=\wd2 %
    \advance\dimen@ -\ht2 %
    \dimen2=\wd0 %
    \advance\dimen2 -\dimen@
    \vbox to \dimen2{%
      \offinterlineskip
      \hbox{$#1\mathpunct{.}\m@th$}%
      \vfill
      \hbox{$#1\mathpunct{\kern\wd2}\mathpunct{.}\m@th$}%
      \vfill
      \hbox{$#1\mathpunct{\kern\wd2}\mathpunct{\kern\wd2}\mathpunct{.}\m@th$}%
    }%
  }%
}
\newenvironment{smatrix}{\left[\begin{smallmatrix}}{\end{smallmatrix}\right]}
\newcommand{\checks}[1]{\ensuremath{\mathrel{\stackrel{?}{#1}}}}
\newlength{\arrowlength}
\newcommand{\fxrightarrow}[1]{\xrightarrow{\mathmakebox[\arrowlength]{#1}}}
\newcommand{\fxleftarrow}[1]{\xleftarrow{\mathmakebox[\arrowlength]{#1}}}
\newcommand{\LINSYS}{\text{\itshape LINSYS}} %
\newcommand*{\doi}[1]{\href{http://dx.doi.org/#1}{\texttt{doi: \nolinkurl{#1}}}}
\newcommand{\OpenDreamKit}{the \href{http://opendreamkit.org}{OpenDreamKit} \href{https://ec.europa.eu/programmes/horizon2020/}{Horizon 2020} \href{https://ec.europa.eu/programmes/horizon2020/en/h2020-section/european-research-infrastructures-including-e-infrastructures}{European Research Infrastructures} project (\#\href{http://cordis.europa.eu/project/rcn/198334_en.html}{676541})}
\newcommand{\GACI}{the French National Research Agency program
  (\href{http://www.agence-nationale-recherche.fr/ProjetIA-15-IDEX-0002}{ANR-15-IDEX-02})}
\begin{document}

\begin{frontmatter}

\title{Elimination-based certificates for triangular equivalence and rank
  profiles\footnote{This work is partly funded by
    \OpenDreamKit~and \GACI~(Dumas, Lucas, Pernet) and by NSF (USA), grants
    CCF-1421128 and CCF-1717100 (Kaltofen).}}

\author{Jean-Guillaume Dumas}
\address{Univ. Grenoble Alpes, CNRS, Grenoble INP\footnote{Institute of Engineering Univ. Grenoble Alpes}, LJK, 38000 Grenoble, France}
\ead{Jean-Guillaume.Dumas@univ-grenoble-alpes.fr}
\ead[url]{http://www-ljk.imag.fr/membres/Jean-Guillaume.Dumas/}

\author{Erich Kaltofen}
\address{North Carolina State University, Department of Mathematics, \\Raleigh, North Carolina 27695-8205, USA}
\ead{kaltofen@math.ncsu.edu}
\ead[url]{http://www.kaltofen.us}

\author{David Lucas}
\address{Univ. Grenoble Alpes, CNRS, Grenoble INP\footnotemark[2], LJK, 38000 Grenoble, France}
\ead{David.Lucas@univ-grenoble-alpes.fr}
\ead[url]{http://www-ljk.imag.fr/membres/David.Lucas/}

\author{Cl\'ement Pernet}
\address{Univ. Grenoble Alpes, CNRS, Grenoble INP\footnotemark[2], LJK, 38000 Grenoble, France}
\ead{Clement.Pernet@univ-grenoble-alpes.fr}
\ead[url]{http://www-ljk.imag.fr/membres/Clement.Pernet/}

\begin{abstract}
In this paper, we give novel certificates for triangular equivalence
and rank profiles. 
These certificates enable somebody to verify the row or column rank profiles or
the whole rank profile matrix faster than
recomputing them, with a negligible overall overhead.
We first provide quadratic time and space non-interactive certificates
saving the logarithmic factors of previously known ones.
Then we propose interactive certificates for the same problems
whose Monte Carlo verification complexity requires a small constant
number of matrix-vector
multiplications, a linear space, and a linear number of extra field operations,
with a linear number of interactions.
As an application we also give an interactive protocol, certifying the
determinant or the signature of dense matrices, faster for the Prover than the
best previously known one. 
Finally we give linear space and constant round certificates for the row or
column rank profiles.
\end{abstract}
\end{frontmatter}
\tableofcontents

\newpage

\section{Introduction}
Within  the setting of verifiable computing, we propose in this paper
{\em interactive certificates} with the taxonomy
of~\cite{dk14}.
Indeed, we consider a protocol where a {\em Prover} performs a
computation and provides additional data structures or exchanges with a
{\em Verifier} who will use these to 
check the validity of the result, faster than by just recomputing it.
More precisely, in an interactive certificate,
the Prover submits
a {\em Commitment}, that is some result of a computation;
the Verifier answers by
a {\em Challenge}, usually some uniformly sampled random values;
the Prover then answers with
a {\em Response}, that the Verifier can use to convince himself of the validity
of the commitment. 
Several {\em rounds} of challenge/response might be necessary
for the Verifier to be fully convinced.

By Prover (resp. Verifier) {\em time}, we thus mean bounds on the number of arithmetic
operations performed by the Prover (resp. Verifier) during the
protocol, while by extra {\em space}, we mean bounds on the volume of
data being exchanged, not counting the size of the input and output of the computation. 

Such protocols are said to be {\em complete} if the probability
that a true statement is rejected by the Verifier can be made arbitrarily
small; and {\em sound} if the probability that a false
statement is accepted by the Verifier can be made arbitrarily small.
In practice it is sufficient that those probabilities are $<1$, as the protocols
can always be run several times.
Some certificates will also be {\em perfectly complete}, that
is a true statement is never rejected by the Verifier. 
All these certificates can be simulated non-interactively by the Fiat-Shamir
heuristic~\cite{Fiat:1986:Shamir}: publicly and uniformly sampled random values
produced by the Verifier are replaced by cryptographic hashes of the
input and of previous messages in the protocol. Complexities are
preserved.

Our protocols follow the proof-of-work protocols
of \cite{Goldwasser:2008:delegating,GKR15}
in that they verify that the Prover has performed
some LU matrix factorization.
However, they do so by verifying the factorization and the triangularity
of the factors, which remain stored on the Prover side
and are not communicated to the Verifier,
rather than verifying the entire circuit that computes
those factors by Lund-Fortnow-Karloff-Nisan polylog-compressive
sumcheck protocols.  In
\cite{jgd:2017:gkr} %
we have
applied \cite{GKR15} to matrices of exponential dimensions
where the entries are computed from their indices by
efficient circuits.
Our version of the GKR proof-of-work protocol
has a Verifier complexity that is, within a polylog
factor, the depth of a parallel circuit whose local
structure can be compute in polylog time, plus one
linear scan of the input.
The Prover complexity is within a polylog factor of
the size of the circuit.
The protocols here avoid those polylog factors.

It is possible to reduce the communication complexity
in~\cite{GKR15} to a constant number of rounds by
when the space complexity is bounded %
\cite{RRR16}
but it is not apparent to us how to asymptotically preserve the Prover's
time complexity then %
(it remains polynomial-time).
We will consider an $m\times n$  matrix $A$ of rank $r$ over a field $\F$.
The \emph{row rank profile}
of $A$ is the lexicographically minimal sequence of $r$ indices of independent
rows
of $A$.
Matrix $A$ has \emph{generic row
  rank profile} if its row
rank profile is $(1, \dots, r)$.
The \emph{column rank profile} is defined similarly on the columns of $A$.
Matrix $A$ has generic rank profile if its $r$ first leading principal minors
are nonzero. 
The \textit{rank profile matrix} of $A$, denoted by $\mathcal{R}_A$ is the
unique $m \times n$ $\{0,1\}$-matrix with $r$ nonzero entries, of which every
leading sub-matrix has the same rank 
as the corresponding sub-matrix of $A$. It is possible to compute
$\mathcal{R}_A$ with a deterministic algorithm in
$\GO{mnr^{\omega-2}}$ or with a Monte-Carlo probabilistic algorithm in $(r^{\omega} + m + n + \mu(A))^{1 + o(1)}$ field operations \cite{jgd:2017:bruhat},
where $\mu(A)$ is the
worst case %
arithmetic cost to multiply $A$ by a vector.

We first propose quadratic, space and verification time,
  non-in\-ter\-ac\-tive practical certificates for the row or column rank profile and
  for the rank profile matrix that are rank-sensitive. Previously
  known certificates have additional logarithmic factors to the
  qua\-dra\-tic complexities: replacing matrix multiplications by
  qua\-dra\-tic verifications in recursive algorithms yields at least one $\log(n)$
  factor~\cite{kns11}, graph-based approaches cumulate this and other
  logarithmic factors, at least from a compression by magical graphs
  and from a dichotomic search~\cite{sy15}.  

  We then propose two linear space interactive
  certificates. The first certificate is used to prove that two non-singular
  matrices are triangular equivalent (i.e. there is a triangular change of basis
  from one to the other). The second certificate is used to prove that a matrix
  has a generic rank profile.
  These two certificates are then applied to certify the row or column rank
  profile, the $P$ (permutation) and $D$ (diagonal) factors of a LDUP
  factorization, the determinant and the rank profile matrix.
  These certificates require, for the Verifier, between 1 and 4 applications of
  $A$ to a vector and a linear number of field operations. They are still
  elimination-based for the Prover, but do not require to 
  communicate the obtained triangular decomposition. 

  An interesting setting would be for instance the case when the matrix $A$ is
  sparse. Blackbox methods could then be used, when elimination-based method
  would suffer from some fill-in. Quite often though, elimination-based methods
  are then more limited by the available memory than by the number of
  computation. A Verifier could then outsource its computations to a server,
  for which fill-in would not be an issue, and use only still sparse
  matrix-vector multiplications to Verify the result.

For instance, for the Determinant, our new certificates require the computation
of a PLUQ decomposition for the Prover, linear communication and Verifier time,
with no restriction on the field size.
The previously best known certificate for the determinant required instead some
characteristic polynomial ({\sc{CharPoly}}) computations.

With respect to~\cite{jgd:2017:rpmcert} we propose a complete analysis 
of the rank profile matrix certificate~\ref{cert:RPM:full} only sketched there; 
an application to computing the signature of a symmetric integral matrix; and a
whole set of new certificates: for triangular equivalence, row and
column rank profile, we are now able to propose protocols that preserve Prover
and Verifier efficiency, while reducing the number of rounds from linear to
constant.
The constant round complexity is an important additional bonus in the
delegation scenario, where network latency can make communication rounds
more expensive. 
Note that the probabilistic analysis of \cite[Theorem~4]{jgd:2017:rpmcert}
omitted to account for several possibilities of failure,
which is corrected here yielding a smaller probability of
detecting a dishonest Prover.
We identify the symmetric group with the group of permutation matrices, and
write $P\in \mathcal{S}_n$ to denote that a matrix $P$ is a permutation
matrix. There, $P[i]$ is the row index of the nonzero element 
of its $i$-th column; $\mathcal{D}_n(\F)$ is the group of
invertible diagonal matrices over the field $\F$;
$\mathcal{D}^{(2)}_n(\F)$ represents block diagonal matrices with diagonal or
anti-diagonal blocks of size $1$ or $2$.
For two subsets of row indices $\mathcal{I}$ and of column indices
$\mathcal{J}$, $A_{\mathcal{I},\mathcal{J}}$ denotes the submatrix extracted
from $A$ in these rows and columns.
The set of prime numbers will be denoted by $\Primes$.
Lastly, $x \sample{S}$ denotes that $x$  is
uniformly independently randomly sampled %
from~$S$. In what follows, while computing the communication space, we consider
that field elements and indices have the same size.
\section{Non interactive and quadratic communication certificates}
\label{sec:noninterractive}
In this section, we propose two certificates, first for the column (resp. row)
rank profile, and, second, for the rank profile matrix.
While the certificates have a quadratic space communication complexity, they have the advantage
of being non-interactive.

\subsection{Freivalds' certificate for matrix product}
\label{subsec:freivalds}

In this paper, we will use Freivalds' certificate \cite{freivalds79} to verify matrix multiplication.
Considering three matrices $A, B$ and $C$ in $\mathbb{F}^{n \times n}$, such that 
$A \times B = C$, a straightforward way of verifying the equality would be to perform the multiplication
$A \times B$ and to compare its result coefficient by coefficient with $C$. While this method is
deterministic, it has a time complexity of $O(n^\omega)$, which is the matrix multiplication complexity.
As such, it cannot be a certificate, as there is no complexity difference between the computation and
the verification.

\begin{protocol}[htbp]
    \centering
    \begin{tabular}{|l c l|}
        \hline
        Prover & & Verifier \\
        & $A, B\in \mathbb{F}^{n \times n}$
        & \\
        \hdashline\rule{0pt}{12pt}
        $C= A  B $ & $\fxrightarrow{\text{C}}$ &
        Choose ${S}\subset{\F}$\\
        & & $v\sample{S}^{n}$\\
        & & $ A  (B  v)- C  v \stackrel{?}{=} 0 $\\\hline
\end{tabular}
    \caption{Freivalds' certificate for matrix product}
    \label{cert:freivalds}
    \vspace{\spaceafterprotocol}
\end{protocol}

Freivalds' certificate proposes a probabilistic method to check this product in a time complexity of
$\mu(A)+\mu(B)+\mu(C)$ using matrix/vector multiplication, as detailed in
Protocol~\ref{cert:freivalds}.

\subsection{Column rank profile certificate}
\label{sec:noninterractive:CRP}

We now propose a certificate for the column rank profile.

\begin{protocol}[htbp]
  \centering
  \begin{tabular}{|l c r|}
    \hline
    \multicolumn{1}{|c}{Prover} & & \multicolumn{1}{c|}{Verifier} \\
    & $A \in \mathbb{F}^{m \times n}$ & \\
    \hdashline
    \multirow{2}{140pt}{a $PLUQ$ decomposition of $A$ s.t. $UQ$ is in
      row echelon form} 
    &\multirow{2}{*}{$\fxrightarrow{\text{P,L,U,Q}}$} 
    &\rule{0pt}{10pt}{$UQ$ row echelonized?}\\
    \rule{0pt}{30pt}&  & $A\stackrel{?}{=}PLUQ$, by Protocol~\ref{cert:freivalds}\\
    \hdashline\rule{0pt}{10pt}
    & & Extract $Q[1],\ldots, Q[r]$\\
    \hline
  \end{tabular}
  \caption{Column rank profile, non-interactive}
  \label{cert:crp:noninter}
\vspace{\spaceafterprotocol}\end{protocol}

\begin{lemma}
    \label{lem:crp:ni}
    Let $A=PLUQ$ be the PLUQ decomposition of an $m \times n$ matrix $A$ of rank
    $r$. If  $UQ$ is in row echelon form then $(Q[1], \dots, Q[r])$ is the column rank profile of $A$.
\end{lemma}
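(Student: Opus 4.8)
The plan is to reduce the statement about $A$ to a statement about the $r\times n$ factor $UQ$, and then to read the column rank profile directly off the echelon shape of $UQ$.

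First I would observe that the column rank profile is invariant under left multiplication by a matrix of full column rank. Writing $A=(PL)(UQ)$, the matrix $PL$ is $m\times r$: $P$ is a permutation and $L$ is unit lower triangular of full column rank $r$, so $PL$ is injective. Consequently, for every $j$ the leading $j$ columns of $A$ and the leading $j$ columns of $UQ$ have the same rank, since precomposing a column block with the injective map $PL$ preserves its rank. Because the column rank profile is exactly the set of indices $j$ at which this prefix rank strictly increases, $A$ and $UQ$ therefore share the same column rank profile. I expect this step to be clean, requiring only the rank-preservation property of an injective left factor.

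Second, I would invoke the hypothesis that $UQ$ is in row echelon form. An $r\times n$ matrix of rank $r$ in row echelon form has exactly $r$ pivots, one per nonzero row, and its column rank profile is precisely the increasing sequence of pivot-column indices: each pivot column is independent of the columns to its left, while every non-pivot column is a linear combination of the pivot columns preceding it. Combined with the first step, it then remains only to identify these pivot columns of $UQ$ explicitly.

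Third, and this is the part I expect to be the genuine obstacle, I would track the interaction between the upper-triangular structure of $U$ and the column permutation $Q$ in order to show that the pivots of $UQ$ sit exactly in columns $Q[1],\dots,Q[r]$. Using that column $j$ of $UQ$ is column $Q[j]$ of $U$, together with the triangular shape of $U$, one checks that the leading nonzero entry of the $k$-th row of $UQ$ lies in column $Q[k]$; the echelon hypothesis then forces $Q[1]<\dots<Q[r]$, so these indices are precisely the pivot columns listed in increasing order. The delicate point is the permutation bookkeeping: one must fix the convention for $Q[\cdot]$ carefully, and one must use that it is exactly the echelon condition on $UQ$ that pins down the first $r$ values of $Q$ (a permutation merely satisfying $A=PLUQ$ need not, by itself, list the pivot columns in order). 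I would therefore phrase this last step so that the echelon hypothesis does the work of certifying that $(Q[1],\dots,Q[r])$ is the sorted sequence of pivot columns. Assembling the three steps identifies $(Q[1],\dots,Q[r])$ with the column rank profile of $A$.
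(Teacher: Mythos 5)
Your first two steps are correct and take a genuinely different route from the paper: the paper left-multiplies $A$ by explicit invertible factors to exhibit $\begin{bmatrix}U_1^{-1}UQ\\0\end{bmatrix}$ as \emph{the} reduced echelon form of $A$ and reads the profile off its pivot positions, whereas you transfer the column rank profile from $A$ to $UQ$ via injectivity of the $m\times r$ factor $PL$ and then analyse $UQ$ directly. Your reduction is cleaner and equally valid; both arguments then stand or fall on the same final step, namely identifying the pivot columns of the echelon matrix $UQ$ with $(Q[1],\dots,Q[r])$.

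That step, which you defer to ``one checks,'' is where the gap lies, and the sub-claim you state there is not a consequence of what you invoke. Triangularity of $U$ gives $U_{k,k}\neq 0$ and tells you where that entry lands in $UQ$; it does not make it the \emph{leading} entry of row $k$ of $UQ$, because an entry $U_{k,l}$ with $l>k$ may be permuted to a position further left. The echelon hypothesis, read in its usual sense, does not exclude this: take $U=\begin{smatrix}1&5&2\\0&1&1\end{smatrix}$ and $Q$ the transposition of columns $2$ and $3$, so $UQ=\begin{smatrix}1&2&5\\0&1&1\end{smatrix}$ is in row echelon form with pivot columns $(1,2)$ while $(Q[1],Q[2])=(1,3)$; with $P=I_2$ and $L=\begin{smatrix}1&0\\4&1\end{smatrix}$ this yields $A=\begin{smatrix}1&2&5\\4&9&21\end{smatrix}$, whose column rank profile is $(1,2)$, not $(1,3)$. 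So the echelon condition must be used in the stronger sense that the leading entry of row $k$ of $UQ$ is precisely the entry coming from $U_{k,k}$ (equivalently, that the entries of $U_2$ permuted to the left of their row's diagonal position vanish); your sketch instead derives the pivot locations from the shape of $U$ alone and uses the echelon hypothesis only to order the $Q[k]$, which is backwards. In fairness, the paper's own proof asserts without justification that $\begin{bmatrix}I_r&U_1^{-1}U_2\end{bmatrix}Q$ is in \emph{reduced} echelon form, which fails on the same example, so you have not omitted an argument the paper actually supplies --- but since your whole step three rests on this unproved claim, the proposal does not close the proof. (Note also that under the convention you adopt, column $j$ of $UQ$ being column $Q[j]$ of $U$ puts $U_{k,k}$ in column $Q^{-1}(k)$ of $UQ$, not $Q[k]$; the bookkeeping you flag as delicate does indeed need fixing.)
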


\begin{proof}
  Write $A=P\begin{smatrix} L_1 \\ L_2 \end{smatrix}\begin{smatrix} U_1 & U_2 \end{smatrix}Q$, where $L_1$ and
  $U_1$ are $r\times r$ lower and upper triangular respectively.
  If $UQ$ is in echelon form, then  
  $ R  = \begin{smatrix} \begin{smallmatrix}I_r & U_{1}^{-1} U_2\end{smallmatrix} \\ {0_{(m-r)\times n}} \end{smatrix}$
  is in reduced echelon form.
  Now
  $$
  \begin{bmatrix} U_1^{-1}\\&I_{m-r} \end{bmatrix}
  \begin{bmatrix} L_1\\L_2&I_{m-r}  \end{bmatrix}^{-1} P^TA= 
  \begin{bmatrix} U_1^{-1}UQ\\0_{(m-r)\times n} \end{bmatrix}=
  R$$
  is left equivalent to $A$ and is therefore the echelon form of $A$.
  Hence the sequence of column positions of the pivots in $R$, that is
  $(Q[1],\dots,Q[r])$, is the 
  column rank profile of $A$.
\end{proof}

Lemma~\ref{lem:crp:ni} provides a criterion to verify a column rank
profile from a PLUQ decomposition.
Such decompositions can be computed in practice by several variants of Gaussian
elimination, with no arithmetic overhead, as shown in~\cite{jps13} or~\cite[\S~6.4]{jgd:2017:bruhat}.
Hence, we propose the certificate in Protocol~\ref{cert:crp:noninter}.

\begin{theorem}
    Let $A \in \mathbb{F}^{m \times n}$ with $r=\rank(A)$.
    Certificate~\ref{cert:crp:noninter},  verifying the column rank profile of $A$
    is sound, perfectly complete, with a communication bounded by $O(r(m+n))$, a Prover
    computation cost bounded by $O(mnr^{\omega -2})$ and
    a Verifier computation cost bounded by $O(r(m+n)) + \mu(A)$.
\end{theorem}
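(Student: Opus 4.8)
The plan is to establish the five assertions — perfect completeness, soundness, and the three complexity bounds — in turn, using Lemma~\ref{lem:crp:ni} for the mathematical correctness and Freivalds' certificate (Protocol~\ref{cert:freivalds}) for the probabilistic product check. The key point is that Lemma~\ref{lem:crp:ni} already reduces ``the output is the column rank profile'' to ``the transmitted factors form a PLUQ decomposition of $A$ with $UQ$ in row echelon form'', so the remaining work is to verify those hypotheses cheaply and to control the one probabilistic test.

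For perfect completeness I would note that an honest Prover sends a genuine PLUQ decomposition with $UQ$ in row echelon form. The echelon test then passes by construction, and since $A = PLUQ$ holds as an exact matrix identity we have $(PLUQ - A)v = 0$ for \emph{every} vector $v$; hence the Freivalds evaluation $P(L(U(Qv))) = Av$ accepts for every challenge, so a correct statement is never rejected.

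For soundness I would argue that, conditioned on acceptance, the extracted indices are the column rank profile except with small probability. First, the Verifier confirms by cheap combinatorial scans that $P$ and $Q$ are permutation matrices and that $L$ and $U$ have the required triangular shapes, in addition to the stated check that $UQ$ is in row echelon form; these deterministic tests are exactly what make the hypotheses of Lemma~\ref{lem:crp:ni} applicable to the transmitted factors, and the number of echelon pivots pins down the inner dimension $r$. The only remaining way to cheat is to transmit structurally valid factors whose product differs from $A$; then $M = PLUQ - A$ is nonzero, and the standard Freivalds analysis shows that a uniform $v\sample{S}^{n}$ satisfies $Mv = 0$ with probability at most $1/|S|$ (pick a nonzero entry of $M$; conditioning on the remaining coordinates of $v$ leaves at most one bad value of the corresponding coordinate). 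Thus with probability at least $1 - 1/|S|$ the accepted factors genuinely form a PLUQ decomposition of $A$ with $UQ$ echelonized, whence Lemma~\ref{lem:crp:ni} certifies that $(Q[1],\dots,Q[r])$ is the column rank profile; moreover, since $L$ has full column rank $r$ and $U$ full row rank $r$ (its $r$ echelon pivots), the identity $A = PLUQ$ forces $\rank(A) = r$, so the Prover cannot lie about the rank either. Enlarging $S$ (or repeating) drives the error below any threshold.

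For the complexities I would account term by term: the communication consists of the two permutations ($m$ and $n$ indices), the $m\times r$ factor $L$, and the $r\times n$ factor $U$, totalling $O(m + n + mr + rn) = O(r(m+n))$ since $r\ge 1$; the Prover merely runs a rank-sensitive PLUQ elimination producing $UQ$ in echelon form, at cost $O(mnr^{\omega-2})$ by the variants of~\cite{jps13} and~\cite[\S 6.4]{jgd:2017:bruhat}; and the Verifier's work is dominated by forming and scanning $UQ$ (a column permutation of $U$, $O(rn)$), the structural checks ($O(r(m+n))$), and the Freivalds evaluation, which applies $Q,U,L,P$ successively to a vector at cost $O(n + rn + mr + m)$ and applies $A$ once at cost $\mu(A)$, summing to $O(r(m+n)) + \mu(A)$. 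I expect the main subtlety to be in the soundness step rather than the bookkeeping: one must be explicit that the permutation and triangular shapes are themselves verified — otherwise Lemma~\ref{lem:crp:ni} does not apply — so that the sole probabilistic failure is the product test, and one must confirm that any accepted structurally valid factorization forces the inner dimension to equal $\rank(A)$.
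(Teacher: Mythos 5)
Your proof is correct and follows essentially the same route as the paper's: perfect completeness from the exact identity $A=PLUQ$, soundness by combining the deterministic structural/echelon checks with the Freivalds product test, and then Lemma~\ref{lem:crp:ni}, with the same term-by-term complexity accounting. Your explicit remark that the permutation and triangular shapes must themselves be verified for the lemma to apply is a welcome precision that the paper's protocol leaves implicit, but it does not change the argument.
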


\begin{proof}
    If the Prover is honest, then, $UQ$ will be in row echelon form and $A=PLUQ$, thus, 
    by Lemma~\ref{lem:crp:ni}, the Verifier will be able to read the column rank
    profile of~$A$ from~$Q$.
    If the Prover is dishonest, either $A \neq PLUQ$, which will be caught by the Prover with probability %-> probabilty
    $p \geq 1-\frac{1}{|S|}$ using Freivalds' certificate \cite{freivalds79} or $UQ$ is not in row echelon from, which
    will be caught every time by the Verifier.

    The Prover sends $P, L, U \text{ and } Q$ to the Verifier, hence the communication cost of $O(r(m+n))$, as
    $P$ and $Q$ are permutation matrices and $L, U$, are respectively $m \times r$ and $r \times n$ matrices,
    with $r = rank(A)$.
    Using algorithms provided in \cite{jps13}, one can compute the expected $PLUQ$ decomposition in
    $O(mnr^{\omega -2})$.
    The Verifier has to check if $A = PLUQ$, and if $UQ$ is in row echelon form, which can be done in $O(r(m+n))$.
\end{proof}

Note that this holds for the row rank profile of $A$: in that case, the Verifier has to check if $PL$ is in
column echelon form.

\subsection{Rank profile matrix certificate}
\label{sec:noninterractive:RPM}

\begin{lemma}\label{lem:echelonized}
  A decomposition  $A=PLUQ$
  reveals the rank profile matrix, namely
  $\RPM{A}=P \begin{smatrix}  I_r\\&0  \end{smatrix}Q$, if and only if
  $P \begin{smatrix} L&0  \end{smatrix}P^T$ is lower triangular and
  $Q^T \begin{smatrix} U\\0  \end{smatrix} Q$ is upper triangular.
\end{lemma}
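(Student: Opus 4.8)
The plan is to reduce the statement to the defining property of the rank profile matrix---that $\RPM{A}$ is the unique $\{0,1\}$-matrix with $r$ nonzero entries all of whose leading submatrices match the ranks of the corresponding leading submatrices of $A$---and to track how those leading ranks behave under the two conjugated factors. First I would record the algebraic identity underlying everything: writing $R=P\begin{smatrix} I_r\\&0\end{smatrix}Q$, $\bar L=P\begin{smatrix} L&0\end{smatrix}P^{T}$ and $\bar U=Q^{T}\begin{smatrix} U\\0\end{smatrix}Q$, the augmentation $LU=\begin{smatrix} L&0\end{smatrix}\begin{smatrix} I_r\\&0\end{smatrix}\begin{smatrix} U\\0\end{smatrix}$ together with $P^{T}P=I_m$ and $QQ^{T}=I_n$ yields $A=\bar L\,R\,\bar U$, valid for \emph{any} $PLUQ$ decomposition, irrespective of triangularity. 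I would also note the purely diagonal bookkeeping that $\bar L$ has a nonzero diagonal entry exactly in the rows $\{P[1],\dots,P[r]\}$, which are precisely the nonzero rows of $R$, and dually that the nonzero diagonal positions of $\bar U$ are the nonzero columns of $R$; this holds unconditionally since $\bar L_{P[k],P[k]}=(\begin{smatrix} L&0\end{smatrix})_{kk}$.

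For the ``if'' direction I would exploit triangularity to localise the product on leading submatrices. Because $\bar L$ is lower and $\bar U$ upper triangular, the leading $i\times j$ block of $A=\bar L R\bar U$ equals the product of the leading blocks $\bar L_{\{1,\dots,i\},\{1,\dots,i\}}\,R_{\{1,\dots,i\},\{1,\dots,j\}}\,\bar U_{\{1,\dots,j\},\{1,\dots,j\}}$, so it suffices to show that neither triangular factor alters the rank of $R_{\{1,\dots,i\},\{1,\dots,j\}}$. Here is the crux: since $R$ is a partial permutation matrix, each nonzero column of its leading block is a standard vector $e_{a_t}$ with $a_t$ a nonzero row of $R$, so $\bar L_{\{1,\dots,i\},\{1,\dots,i\}}$ sends it to its $a_t$-th column; because $a_t\in\{P[1],\dots,P[r]\}$, these columns have their topmost nonzero (the diagonal pivot) at the distinct rows $a_t\le i$, forming a staircase and hence being linearly independent. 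Thus left multiplication by $\bar L$ preserves the rank of every leading block, and transposing the same argument (with $\bar U^{T}$ lower triangular acting on the nonzero columns, which are unchanged by $\bar L$) disposes of $\bar U$. Composing, $\rank$ of every leading submatrix of $A$ equals that of $R$, and since $R$ is a $\{0,1\}$-matrix with $r$ nonzeros, uniqueness forces $\RPM{A}=R$.

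The ``only if'' direction is where I expect the real difficulty, since without triangularity the leading block of $\bar L R\bar U$ no longer factors through leading blocks and the clean localisation above is unavailable. I would argue by contraposition: assuming, say, that $\bar L$ is not lower triangular, an above-diagonal nonzero $\bar L_{ab}$ with $a<b$ translates, via $\bar L_{ab}=(\begin{smatrix} L&0\end{smatrix})_{P^{-1}[a],P^{-1}[b]}$, into an \emph{inversion} of the pivot order---indices $k<l$ with $P[k]>P[l]$, $k\le r$ and $L_{l,k}\neq 0$---meaning a pivot chosen late sits in an earlier original row that already carried mass in an earlier pivot column. I would then select such an inversion minimally and exhibit one explicit leading submatrix $A_{\{1,\dots,i\},\{1,\dots,j\}}$ around the offending pivot whose rank differs from the number of nonzeros of $R$ it contains, contradicting $\RPM{A}=R$; already the toy case $P=\begin{smatrix} 0&1\\1&0\end{smatrix}$, where the leading $1\times1$ rank jumps from $0$ in $R$ to $1$ in $A$, exhibits the mechanism. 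Making the choice of $(i,j)$ uniform---equivalently, organising it as an induction following the recursive Schur-complement structure of the elimination that produced $PLUQ$---is the main obstacle, and the dual statement for $\bar U$ follows by applying the whole argument to $A^{T}$.
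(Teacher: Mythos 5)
Your ``if'' direction is correct and ends up close to the paper's own argument, though by a more hands-on route. The paper augments the two conjugates to \emph{invertible} triangular matrices $\overline{L}=P\begin{smatrix}L& \begin{smallmatrix}0\\I_{m-r}\end{smallmatrix}\end{smatrix}P^T$ and $\overline{U}=Q^T\begin{smatrix}U\\ \begin{smallmatrix}0&I_{n-r}\end{smallmatrix}\end{smatrix}Q$ (the added block is a permuted diagonal, so triangularity is preserved), writes $A=\overline{L}\,R\,\overline{U}$ with $R=P\begin{smatrix}I_r\\&0\end{smatrix}Q$, and concludes immediately because left (resp.\ right) multiplication by an invertible lower (resp.\ upper) triangular matrix preserves the rank of every leading submatrix. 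You keep the singular conjugates and re-prove that rank preservation by hand via the staircase of pivot columns; this works, and it silently uses the same convention the paper does, namely that the $r\times r$ leading blocks of $L$ and of $U$ have nonzero diagonal. The augmentation trick just turns your crux into a one-line standard fact.

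The genuine gap is the ``only if'' direction, and you name it yourself: the step you call ``the main obstacle''---choosing, for a minimal inversion of the pivot order, a leading submatrix $(i,j)$ whose rank in $A$ provably differs from its rank in $R$---is essentially the whole content of that implication, not a technicality. The $2\times2$ swap example shows the mechanism in isolation, but in general the offending above-diagonal entry of $P\begin{smatrix}L&0\end{smatrix}P^T$ contributes to $A$ together with all the other pivots, and one must argue that these contributions cannot conspire to leave every leading rank unchanged; a naive ``first inversion'' choice of $(i,j)$ does not obviously work. So as written the proposal proves one implication and asserts the other. For calibration: the paper does not re-derive this direction either---it cites Theorem~21 of \cite{jgd:2017:bruhat}, where the combinatorics of pivot orderings is worked out---so your instinct that this is the hard half is right, but your sketch does not yet constitute a proof of it.
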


\begin{proof}

  The \textit{only if} case is proven in~\cite[Th.~21]{jgd:2017:bruhat}.
Now suppose that 
$P\begin{smatrix}L&0_{m\times(m-r)} \end{smatrix}P^T$ is lower triangular.
Then we must also have that
$\overline{L}=P\begin{smatrix}L&
\begin{smallmatrix}  0\\I_{m-r}\end{smallmatrix}
 \end{smatrix}P^T$
is lower triangular and non-singular.  Similarly suppose that
$Q^T \begin{smatrix} U\\0  \end{smatrix} Q$ is upper triangular so that
$\overline{U}=Q^T\begin{smatrix}U\\
\begin{smallmatrix}  0&I_{n-r}\end{smallmatrix}\end{smatrix}
Q$
is non-singular upper triangular.
We have $A=\overline{L}P \begin{smatrix}  I_r\\&0\end{smatrix}Q\overline{U}$.
Hence the rank of any $(i,j)$ leading submatrix of $A$ is that of the $(i,j)$
leading submatrix of $P\begin{smatrix}  I_r\\&0\end{smatrix}Q$, thus proving
that $\RPM{A}=P\begin{smatrix}  I_r\\&0\end{smatrix}Q$.
\end{proof}

We use this characterization to verify the computation of the rank
profile matrix in the following protocol:
Once the Verifier receives $P, L, U \text{ and } Q$, he has to
check that $A = PLUQ$, using Freivalds' certificate \cite{freivalds79}, 
and check that $L$ is echelonized by $P$ and $U^T$ by~$Q^T$.
If successful, the Verifier can just compute the rank profile matrix of $A$ from
$P$ and $Q$, as shown in Protocol~\ref{cert:RPM:noninter}.

\begin{protocol}[htbp]
    \centering
    \begin{tabular}{|c c p{6cm}|}
      \hline
       Prover & & \hspace{15pt}Verifier \\
        \multicolumn{3}{|c|}{$A \in \mathbb{F}^{m \times n}$} \\
        \hdashline\rule{0pt}{12pt}
         \multirow{3}{3.4cm}{a PLUQ decomp. of $A$ revealing  $\RPM{A}$.} &
         $\xrightarrow{\text{P,L,U,Q}}$ &
         1.  $A\stackrel{?}{=}PLUQ$ by
         Protocol~\ref{cert:freivalds}\\
         & & 2. Is $PLP^T$ lower triangular?\\
         & & 3. Is $Q^TUQ$ upper triangular?\\
         & & Extract $\mathcal{R}_A = P \begin{bsmallmatrix} I_{r} & \\ &
           0_{(m-r) \times (n-r)} \end{bsmallmatrix} Q$\\
         \hline
    \end{tabular}
    \caption{Rank profile matrix, non-interactive}
    \label{cert:RPM:noninter}
\vspace{\spaceafterprotocol}\end{protocol}

\begin{theorem}
    Certificate~\ref{cert:RPM:noninter} verifies the rank profile matrix of $A$,
    it is sound and perfectly complete, with a communication cost bounded by
    $O(r(m+n))$, a Prover computation cost bounded by $O(mnr^{\omega-2})$ and a
    Verifier computation cost bounded by $O(r(m+n)) + \mu(A)$. 
\end{theorem}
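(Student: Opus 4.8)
The plan is to prove the four claims (soundness, perfect completeness, and the three complexity bounds) by essentially mirroring the structure of the proof of the previous theorem for Protocol~\ref{cert:crp:noninter}, but substituting the triangularity characterization of Lemma~\ref{lem:echelonized} for the echelon-form criterion of Lemma~\ref{lem:crp:ni}. First I would handle \emph{perfect completeness}: if the Prover is honest, it sends a genuine $PLUQ$ decomposition revealing $\RPM{A}$, so by the \textit{only if} direction of Lemma~\ref{lem:echelonized} the matrices $PLP^T$ and $Q^TUQ$ are lower and upper triangular respectively, checks 2 and 3 pass deterministically, and the Freivalds test in check 1 accepts with probability $1$ since $A=PLUQ$ truly holds; hence the Verifier always reaches the extraction step and reads off $\RPM{A}=P\begin{smatrix}I_r\\&0\end{smatrix}Q$ correctly.

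For \emph{soundness} I would argue by cases on how a dishonest Prover can cheat. Either the submitted factors fail $A=PLUQ$, in which case Freivalds' certificate (Protocol~\ref{cert:freivalds}) rejects with probability at least $1-1/|S|$; or the equality $A=PLUQ$ does hold but the triangularity conditions fail, in which case checks 2 or 3 reject deterministically. The crucial point, and the one I expect to be the main obstacle to state cleanly, is to argue the converse: if all three checks pass, then the extracted matrix is genuinely $\RPM{A}$. This is exactly where the \textit{if} direction of Lemma~\ref{lem:echelonized} is invoked---once $A=PLUQ$ (up to the Freivalds error probability) and $PLP^T$, $Q^TUQ$ are triangular, the lemma guarantees $\RPM{A}=P\begin{smatrix}I_r\\&0\end{smatrix}Q$, so the Verifier's output is correct except with probability at most $1/|S|$, which can be made arbitrarily small by enlarging $S$.

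The three complexity bounds I would dispatch quickly, as they are identical to the Protocol~\ref{cert:crp:noninter} case. The \emph{communication} consists of $P,Q\in\mathcal{S}$ (storable as index vectors, $O(m+n)$) together with $L$ of size $m\times r$ and $U$ of size $r\times n$, giving $O(r(m+n))$. The \emph{Prover cost} is that of computing a $PLUQ$ decomposition revealing the rank profile matrix, which by~\cite{jgd:2017:bruhat} is $O(mnr^{\omega-2})$ with no overhead. For the \emph{Verifier cost}, the Freivalds check costs $\mu(A)$ plus the cost of applying $L,U,P,Q$ to a vector, which is $O(r(m+n))$; testing whether $PLP^T$ is lower triangular and $Q^TUQ$ is upper triangular amounts to checking that each nonzero entry of $L$ (resp.\ $U$), after the row permutation by $P$ (resp.\ column permutation by $Q$), lands below (resp.\ above) the diagonal, which can be read off by scanning the $O(r(m+n))$ nonzero positions without forming any product. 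Summing gives the claimed $O(r(m+n))+\mu(A)$. The only subtlety worth a sentence is noting that checks 2 and 3 are performed by inspecting permuted support positions rather than by explicit triangular matrix conjugation, so they stay within the linear-in-$r(m+n)$ budget.
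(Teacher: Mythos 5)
Your proposal is correct and follows essentially the same route as the paper's own proof: perfect completeness via the \emph{only if} direction of Lemma~\ref{lem:echelonized}, soundness by splitting into the Freivalds failure case and the triangularity-check failure case, and the same complexity accounting. Your added remark that checks 2 and 3 should be performed by scanning permuted support positions rather than forming the conjugated products is a useful implementation detail the paper leaves implicit, but it does not change the argument.
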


\begin{proof}
    If the Prover is honest, then, the provided $PLUQ$ decomposition is indeed
    a factorization of $A$, which means Freivalds' certificate will pass.
    It also means this $PLUQ$ decomposition reveals the rank profile matrix.
    According to Lemma~\ref{lem:echelonized}, $PLP^T$ will be lower triangular
    and $Q^TUQ$ upper triangular. Hence the verification will succeeds and 
    $\mathcal{R}_A = P \begin{bsmallmatrix} I_{r} & \\ & 0_{(m-r) \times (n-r)} \end{bsmallmatrix} Q$
    is indeed the rank profile matrix of $A$.
    If the Prover is dishonest, either $A \neq PLUQ$, which will be caught with probabilty
    $p \geq 1-\frac{1}{|S|}$ by Freivalds' certificate or the $PLUQ$
    decomposition does not reveal the rank profile matrix of $A$.
    In that case, Lemma~\ref{lem:echelonized} implies that either
    $P \begin{smatrix}      L&0    \end{smatrix}P^T$ is not lower triangular or
    $P \begin{smatrix}      U\\ 0 \end{smatrix}Q$ is not upper triangular which
    will be detected.

    The Prover sends $P, L, U \text{ and } Q$ to the Verifier, hence the
    communication cost of $\GO{r(m+n)}$.
    A rank profile matrix revealing $PLUQ$ decomposition can be computed in $O(mnr^{\omega-2})$ operations~\cite{DPS:2013}.
    The Verifier has to check if $A = PLUQ$, which can be achieved in
    $O(r(m+n))+\mu(A)$ field operations.
\end{proof}

\section{Linear communication certificate toolbox}

\subsection{Triangular one sided equivalence}

Two matrices $A, B \in \mathbb{F}^{m \times n}$ are right (resp. left)
equivalent if there exist an invertible $n\times n$ matrix $T$ such that $AT=B$
(resp. $TA=B$). If in addition $T$ is a lower triangular matrix, we say that $A$ and
$B$ are lower triangular right (resp. left)  equivalent. The upper triangular
right (resp. left ) equivalence is defined similarly.
We propose a certification protocol that two matrices are left or right
triangular equivalent.
Here, $A$ and $B$ are input, known by the Verifier and the Prover, and $A$ is
supposed to be \emph{regular} (full rank).
A simple certificate would be the matrix $T$ itself, in which case the Verifier would
check the product $AT = B$ using Freivalds' certificate.
This certificate is non-interactive and requires a quadratic number of communication.
In what follows, we present a certificate which allows to verify the one sided
triangular equivalence without communicating $T$, requiring only $2n$
communications.
It is essentially a Freivalds' certificate with a constrained interaction
pattern in the way the challenge vector and the response vector are
communicated.
This pattern imposes a triangular structure in the way the Provers' responses
depend on the Verifier challenges.

\begin{protocol}[htbp]
    \centering
    \begin{tabular}{|p{4.4cm} c l|}
        \hline
        Prover &  & Verifier \\
        & $ A, B \in \mathbb{F}^{m{\times}n} $ &  \\
        & $A$ regular, $m{\geq}n$ &  \\
        \hdashline\rule{0pt}{12pt}
        $T$ lower triangular matrix s.t. $AT = B$ & $\xrightarrow{\text{1: T exists}}$ & \\
        \hdashline\rule{0pt}{12pt}
        \multirow{2}{*}{$y_1 = T_{1,*} \begin{smatrix}x_1\\0\\ \vdots \end{smatrix} $}& $\xleftarrow{\text{$2: x_1$}}$ & $x_i\sample{S}\subset{\F}$ \\
         & $\xrightarrow{\text{$3: y_1$}}$ & \\
        \multicolumn{1}{|c}{$\vdots$}&$\vdots$& \\
        \multirow{2}{*}{$y_n = T_{n,*} \begin{smatrix}x_1\\ \vdots\\x_n \end{smatrix} $}& $\xleftarrow{\text{$2n: x_n$}}$ &  \\
        & $\xrightarrow{\text{$2n+1: y_n$}}$ &
        $y =\begin{bmatrix}y_1&..&y_n\end{bmatrix}^T$ \\
        & &
        $Ay\stackrel{?}{=}Bx$\\
        \hline
    \end{tabular}
    \caption{Lower triang. right equivalence of regular matrices}
    \label{cert:triangular_equivalence}
\vspace{\spaceafterprotocol}\end{protocol}

\begin{theorem}
    \label{th:triangular_equivalence}
    Let $A, B \in \mathbb{F}^{m \times n}$, $m{\geq}n$, and assume $A$ is regular.
    Certificate~\ref{cert:triangular_equivalence}
    proves that there exists a lower triangular matrix $T$ such that $AT = B$.
    This certificate is sound, with probabilty larger than $1-\frac{1}{|S|}$,
    perfectly complete and occupies $2n$
    communication space. The Prover complexity is $O(mn^{\omega-1})$ field operations and  
    the Verifier computation cost  is $\mu(A) + \mu(B)$ field operations.
\end{theorem}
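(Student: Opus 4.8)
My plan is to treat perfect completeness, soundness, and the complexity bounds in turn, with soundness being the only delicate point. First I fix notation: since $A$ is regular and $m\ge n$ it has full column rank $n$, hence a left inverse $L_A$ with $L_A A=I_n$; consequently, whenever a system $Ay=b$ is solvable its solution is unique. Perfect completeness is then immediate: an honest Prover holds a lower triangular $T$ with $AT=B$, and because $T_{i,j}=0$ for $j>i$ the $i$-th response is $y_i=\sum_{j=1}^{i}T_{i,j}x_j=(Tx)_i$; thus the reconstructed vector equals $y=Tx$ and the final test reads $Ay=ATx=Bx$, which always holds.

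The key structural feature I would exploit for soundness is the interaction pattern: the Prover must commit to $y_i$ after seeing only $x_1,\dots,x_i$, so any (possibly cheating) strategy is described by functions $y_i=f_i(x_1,\dots,x_i)$ with $f_i$ independent of $x_{i+1},\dots,x_n$; a standard averaging argument lets me assume these are deterministic. I then distinguish two ways in which the asserted statement can be false. If no matrix $T$ satisfies $AT=B$ at all, then some column of $B$ lies outside $\operatorname{Im}(A)$, so $(AL_A-I)B$ is a nonzero matrix. Applying $(AL_A-I)$ to the verified identity and using $(AL_A-I)A=0$ forces $(AL_A-I)Bx=0$; this is a fixed nonzero linear form in the uniform $x\in S^n$, so, irrespective of the Prover's answers, this necessary condition — and hence the whole test — holds with probability at most $1/|S|$ by the usual Freivalds/Schwartz--Zippel bound.

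The remaining and genuinely delicate case is when the unique solution $T^{*}=L_AB$ of $AT^{*}=B$ exists but fails to be lower triangular. Here full column rank makes the test $Ay=Bx$ equivalent to $y=T^{*}x$, that is $y_i=(T^{*}x)_i$ for every $i$. Choosing some $i$ and $j_0>i$ with $T^{*}_{i,j_0}\neq 0$, I condition on $x_1,\dots,x_i$: this freezes $f_i$ to a constant, whereas $(T^{*}x)_i$ remains a nonconstant affine function of the independent uniform variable $x_{j_0}$, so the two coincide with probability at most $1/|S|$. Thus in both cases a false statement is accepted with probability at most $1/|S|$, giving soundness with probability larger than $1-1/|S|$. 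The main obstacle is precisely this subcase: one must use that the triangular interaction denies each $f_i$ access to exactly the coordinates $x_{j_0}$, $j_0>i$, that a non-triangular $T^{*}$ would require.

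Finally, for the complexity claims the Prover solves the regular system $AT=B$ once, in $O(mn^{\omega-1})$ field operations, after which each response is a single inner product; the transcript consists only of the $n$ challenges $x_i$ and the $n$ responses $y_i$, hence $2n$ field elements of communication; and the Verifier's cost is dominated by forming the two matrix--vector products $Ay$ and $Bx$, namely $\mu(A)+\mu(B)$ field operations.
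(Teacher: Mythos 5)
Your proposal is correct and follows essentially the same route as the paper: perfect completeness from $y=Tx$, soundness split into the inconsistent case (where your $(AL_AA-A)=0$ argument is just a matrix form of the paper's Farkas vector $z$ with $z^TA=0$, $z^TB_{*,j}\neq 0$) and the non-triangular case (exploiting that $y_i$ is committed before $x_{j_0}$ for $j_0>i$, exactly as in the paper), and the same complexity accounting. No gaps.
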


\begin{proof}
  If the Prover is honest, then $AT=B$ with $T$ triangular and she just computes
  $y=Tx$, so that $Ay=ATx=Bx$.
  If the Prover is dishonest, then she %
must try to convince the Verifier even if
  the matrices are inequivalent.
  For the sake of the argument, replace the random values $x_1,\dots,x_n$ by
  algebraically independent variables $X_1,\dots,X_n$. 
  Then there are two cases, either $AT\neq B$ for any $T$
  or there exists at least one such matrix $T$ but none of them are triangular.

  In the former case, $AT\neq B$, there is thus at least one inconsistent column
  in $B$, say the $j$-th. Then, there exists a Farkas' %-> apostroph?
 certificate of
  inconsistency for that column (a vector $z$ such that $z^TA=0$ and
  $z^TB_{*,j}\neq 0$). This means that $z^TAy=0$ for any $y$, but $z^T B [X_1,\dots,X_n]^T$
  is a not identically zero polynomial  (at least the coefficient of $X_j$
  is non zero) of degree $1$. Therefore, 
  by the DeMillo-Lipton/\allowbreak Schwartz/\allowbreak Zippel
  lemma~\cite{DeMillo:1978:ipl,Zippel:1979:ZSlemma,Schwartz:1979:SZlemma},
  its evaluation will be zero with probability at most $1/|S|$.

  In the latter case, $AT=B$ but $T$ is not triangular.
  Since $A$ is regular, there is thus a unique $n\times n$ matrix
  $T$ (that is, $T=\leftinv{A}B$, for any $\leftinv{A}$ left inverse of $A$)
  such that $AT=B$: indeed $T=\leftinv{A}AT=\leftinv{A}B$.
  For the same reason, the equality $Ay = Bx = ATx$ implies $y = Tx$.
  If $T$ is not lower triangular, there is a row-index $i$ such
  that the entry $t_{i,j_m}\ne 0$ for some $j_m>i$.  The test $y = Tx$
  only succeeds if $y_i = \sum_{j=0}^n t_{i,j}x_j$.  Now the Prover
  selects $y_i$ before $x_{j_m}$ is revealed.  Therefore, with
  probability no more than $1/|S|$ the Verifier selects the field element
  $x_{j_m} = 1/t_{i,j_m} (y_i - \sum_{j\ne j_m} t_{i,j}x_j)$, and
  the test succeeds for false $T$.

  This certificate requires to transmit $x$ and $y$, which costs $2n$ in communication.
    The Verifier has to compute $Ay$ and $Bx$, whose computational cost is $\mu(A) + \mu(B)$.
    The Prover has to compute $T$, this can be done by a PLUQ
    elimination on $A$ followed by a triangular system solve, both in
    $O(mn^{\omega-1})$. Then  $y=Tx$ requires only $O(n^2)$ operations.
\end{proof}

    Note that the case where $T$ is upper triangular works similarly: the Verifier needs to transmit
    $x$ in reverse order, starting by $x_n$.

\subsection{Generic rank profile-ness}

The problem here is to verify whether a non-singular input matrix
$A\in\F^{m\times n}$ has generic rank profile (to test non-singularity, one can
apply beforehand the linear communication certificate in~\cite[Fig.~2]{dk14},
see also Protocol~\ref{cert:lower_rank} thereafter).
A matrix $A$ has generic rank profile if and only if it has an LU decomposition
$A=LU$, with $L$ non-singular lower triangular and $U$ non-singular upper
triangular.  The protocol picks random vectors $\phi,\psi,\lambda$ and asks the Prover to
provide the vectors $z^T=\lambda^TL$, $x=U\phi$, $y=U\psi$ on the fly, while receiving the
coefficients of the vectors $\phi,\psi,\lambda$ one at a time.
These vectors satisfy the fundamental equations  %
$z^Tx=\lambda^TA\phi$ and $z^Ty=\lambda^TA\psi$ that will be checked by the Verifier.

\begin{protocol}[htbp]
  \centering
  \begin{tabular}{|l c l|}
    \hline\rule{0pt}{12pt}   
    Prover & & Verifier \\
     & $A \in \mathbb{F}^{n{\times}n}$ & \\
     & non-singular & \\
    \hdashline\rule{0pt}{12pt}    
     $A=LU $ &$\xrightarrow{\text{$A$ %
     has g.r.p.}}$ &\\
    \hdashline\rule{0pt}{12pt}   
    &\multicolumn{2}{l|}{\textbf{for} $i$ from $n$ downto $1$} \\
    $\begin{bmatrix} x&y \end{bmatrix} = U \begin{bmatrix}\phi&\psi \end{bmatrix}$ & $\xleftarrow{\phi_i,\psi_i}$ & \hspace{-2pt}$(\phi_i,\psi_i) \sample{S^2}\subset{\F^2}$\\
    & $\xrightarrow{x_i,y_i} $ &\\
    $z^T = \lambda^T L$& $\xleftarrow{\lambda_i}$ & \hspace{-2pt}$\lambda_i\sample{S}\subset{\F}$\\
    & $\xrightarrow{z_i} $ &\\
    \hdashline\rule{0pt}{12pt}    
    && \hspace{-2pt}$z^T  \begin{bmatrix}x&y\end{bmatrix}  \checks{=} (\lambda^T A)\begin{bmatrix}\phi&\psi\end{bmatrix}$ \\
    \hline
  \end{tabular}
  \caption{Generic rank profile with linear communication}
  \label{cert:GRP}
      \vspace{\spaceafterprotocol}
\end{protocol}

\begin{theorem}
\label{th:GRP}
   Certificate~\ref{cert:GRP} verifying that a non-singular matrix has generic
   rank profile is sound, with probability $\ge (1-\frac{1}{|S|})^{2n}$,
   perfectly complete, communicates $6n$ field elements, and can be computed in
   $O(n^\omega)$ field operations for the Prover and $\mu(A) + 8n$ field
   operations for the Verifier. 
\end{theorem}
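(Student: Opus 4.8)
The plan is to establish the three assertions of Theorem~\ref{th:GRP} — perfect completeness, soundness, and the resource bounds — separately, the soundness analysis being the substantial part. For perfect completeness, take an honest Prover holding $A=LU$ with $L$ non-singular lower triangular and $U$ non-singular upper triangular. The decisive observation is that the descending loop order is compatible with the triangular shapes: since $U$ is upper triangular, $x_i=\sum_{j\ge i}U_{ij}\phi_j$ involves only $\phi_i,\dots,\phi_n$, all already revealed at round $i$; the same holds for $y_i$ in terms of $\psi$, and $z_i=\sum_{j\ge i}L_{ji}\lambda_j$ involves only $\lambda_i,\dots,\lambda_n$. Thus every response is computable on the fly, and the two checks hold as identities because $z^Tx=\lambda^TLU\phi=\lambda^TA\phi$ and $z^Ty=\lambda^TLU\psi=\lambda^TA\psi$; hence a true statement is never rejected.

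For soundness I would fix a (without loss of generality deterministic) Prover and exploit the interaction pattern exactly as in the proof of Theorem~\ref{th:triangular_equivalence}: the Prover must emit $x_i,y_i$ before $\phi_1,\dots,\phi_{i-1}$ are revealed and $z_i$ before $\lambda_1,\dots,\lambda_{i-1}$ are revealed. Processing the rounds from $i=n$ down to $1$ and replacing each freshly revealed challenge by an indeterminate, I would argue that passing the two checks forces the response maps to be \emph{linear and triangular}: there exist a lower triangular $\widehat{L}$ and an upper triangular $\widehat{U}$ with $z^T=\lambda^T\widehat{L}$ and $x=\widehat{U}\phi$, together with the analogous map for $y$ from $\psi$. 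Because the same $z$, hence the same $\widehat{L}$, serves both columns, the $\phi$-check and the $\psi$-check force the two right-hand maps to coincide with $\widehat{L}^{-1}A$; substituting into $z^T[x\ y]=(\lambda^TA)[\phi\ \psi]$ yields $\lambda^T\widehat{L}\widehat{U}\phi=\lambda^TA\phi$ for the sampled values, pinning $\widehat{L}\widehat{U}=A$. As $A$ is non-singular, both factors are non-singular, so $A=\widehat{L}\widehat{U}$ is an $LU$ decomposition and $A$ has generic rank profile. Contrapositively, if $A$ has no generic rank profile then no such triangular factorisation exists, so at some round a response had to be committed before the challenge that would render its check consistent; each such event is a single linear condition on one fresh uniform challenge, avoided with probability at most $1/|S|$. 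Multiplying over the $2n$ challenge-revelation steps — the $n$ scalars $\lambda_i$ and the $n$ pairs $(\phi_i,\psi_i)$ — gives a detection probability $\ge(1-1/|S|)^{2n}$.

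The resource bounds are routine bookkeeping. Each of the $n$ rounds exchanges $\phi_i,\psi_i,\lambda_i$ downward and $x_i,y_i,z_i$ upward, i.e.\ $6$ field elements, for $6n$ in total. The Prover's $LU$ factorisation costs $O(n^\omega)$, dominating the $O(n^2)$ needed to form $U\phi$, $U\psi$ and $\lambda^TL$ incrementally. The Verifier forms $\lambda^TA$ in $\mu(A)$ operations and then evaluates the four inner products $z^Tx$, $z^Ty$, $(\lambda^TA)\phi$ and $(\lambda^TA)\psi$ in $8n$ operations, for $\mu(A)+8n$ overall.

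I expect the main obstacle to be the first half of the soundness step. A dishonest Prover's responses are a priori \emph{arbitrary} functions of the challenges seen so far, so one cannot simply apply the DeMillo--Lipton/Schwartz/Zippel lemma to a bounded-degree polynomial; the interaction pattern has to be used carefully, in the order the challenges are revealed, to exclude non-linear or non-triangular responses and to account for \emph{all} $2n$ failure opportunities — the very point the introduction notes was undercounted in the earlier analysis.
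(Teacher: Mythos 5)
Your completeness argument and the resource accounting are correct and agree with the paper's. The soundness part, however, has a genuine gap precisely where you anticipate one: you assert that ``passing the two checks forces the response maps to be linear and triangular,'' i.e.\ that there exist triangular $\widehat{L},\widehat{U}$ with $z^T=\lambda^T\widehat{L}$, $x=\widehat{U}\phi$, $y=\widehat{U}\psi$ and $\widehat{L}\widehat{U}=A$, but you never prove it, and this is exactly the hard part. A dishonest Prover's $x_i,y_i,z_i$ are arbitrary functions of the challenges revealed so far, and the Verifier performs only two scalar checks at the very end; there is no per-round check from which to extract ``a single linear condition on one fresh uniform challenge'' at each of the $2n$ revelation steps. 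Your contrapositive (``if no triangular factorisation exists, then at some round a response had to be committed before the challenge that would render its check consistent'') is the statement that needs proving, not an argument for it.

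The paper supplies the missing machinery: for each $i$ it introduces the partial LU decomposition (\ref{eq:partialLU}) with Schur complement $C\supabr{i}$, the bilinear quantities $\eta_i,\xi_i$ of (\ref{eq:bilinC}), and the predicate $H_i$ of (\ref{eq:predicate}), whose instance $H_1$ is the Verifier's final check. It then proves two statements anchored at the first index $i_0$ where a leading minor vanishes (so $C\supabr{i_0}_{1,1}=0$): first, $H_{i_0}$ fails with probability $\ge(1-1/|S|)^4$; second, if $H_{i+1}$ fails then $H_i$ fails with probability $\ge(1-1/|S|)^2$, and the failure propagates down to $H_1$. Both steps reduce to the $2\times 2$ linear system (\ref{eq:mat:i}) in $(\Lambda_i,Z_i)$; the delicate case is when its coefficient matrix is singular, where the Prover can hope that the determinant $\Delta$ vanishes, and this splits into several subcases controlled by the nonzero entries $C\supabr{i_0}_{k_0,1}$ and $C\supabr{i_0}_{1,j_0}$ guaranteed by the non-singularity of $A$. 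None of this case analysis --- in particular the possibility that the Prover exploits a singular coefficient matrix rather than merely having to guess one forbidden challenge value --- appears in your ``one linear condition per challenge'' accounting, so the bound $(1-1/|S|)^{2n}$ is not justified by your argument as it stands.
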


\begin{proof}[Proof of Theorem~\ref{th:GRP}]
The protocol is perfectly complete: if $A=LU$, then
$z^T\begin{bmatrix}x&y\end{bmatrix}=
\lambda^TLU\begin{bmatrix}\phi&\psi\end{bmatrix}=\lambda^TA\begin{bmatrix}\phi&\psi\end{bmatrix}$,
and the answer of any honest Prover will pass the Verifier test.

For any $i$ such that the $(i-1)\times (i-1)$ leading submatrix of $A$ has
generic rank profile, we can write a partial LU decomposition of $A$ with the following notations:
\begin{equation}\label{eq:partialLU}
A = 
\underbrace{\begin{bmatrix}
L\supabr{i} & 0 \\
B\supabr{i} & I_{n-i+1}
\end{bmatrix}}_B
\;
\underbrace{\begin{bmatrix}
U\supabr{i} & V\supabr{i} \\
0 & C\supabr{i}
\end{bmatrix}}_C,
\end{equation}
where $L\supabr{i} \in \FF^{(i-1)\times (i-1)}$ is non-singular lower triangular,
$U\supabr{i} \in \FF^{(i-1)\times (i-1)}$ is non-singular upper triangular,
$B\supabr{i} \in \FF^{(n-i+1)\times (i-1)}$, $V\supabr{i}\in \FF^{(i-1)\times (n-i+1)}$,
\newline%
$C\supabr{i}\in\FF^{(n-i+1)\times(n-i+1)}$.

Let $v\supbr{i...n} = [v_i,\ldots,v_n]^T\in\FF^{n-i+1}$
for a vector $v\in\FF^n$, and let
\begin{equation}\label{eq:bilinC}
\eta_i =
(\lambda\supbr{i...n})^T C\supabr{i} \phi\supbr{i...n},\quad
\xi_i =
(\lambda\supbr{i...n})^T C\supabr{i} \psi\supbr{i...n}.
\end{equation}
Consider the following predicate: %
\begin{equation}\label{eq:predicate}
H_i\colon %
\eta_i {=} (z\supbr{i\dots n})^T x\supbr{i\dots n}
\text{ and }
\xi_i {=} (z\supbr{i\dots n})^T y\supbr{i\dots n}.
\end{equation}
Note that $H_1$ is
what the Verifier checks because then $B = I_n$.
Note also that when $A$ is in generic rank profile with $A = LU$
and $z^T = \lambda^T L$
and $x = U\phi$ and $y = U\psi$ then $H_i$ is true for all $i$.
To see this consider an LU-factorization
$C\supabr{i} = \bar L\supabr{i}\bar U\supabr{i}$
and the identity
\begin{equation}
A =
\begin{bmatrix}
L\supabr{i} & 0 \\
B\supabr{i} & I_{n-i+1}
\end{bmatrix}
\;
\begin{bmatrix}
U\supabr{i} & V\supabr{i} \\
0 & C\supabr{i}
\end{bmatrix}
=
\begin{bmatrix}
L\supabr{i} & 0 \\
B\supabr{i} & \bar L\supabr{i}
\end{bmatrix}
\;
\begin{bmatrix}
U\supabr{i} & V\supabr{i} \\
0 & \bar U\supabr{i}
\end{bmatrix} = LU.
\end{equation}
Then %
$(z\supbr{i...n})^T = (\lambda\supbr{i...n})^T \bar L\supabr{i}$
and $x\supbr{i...n} = \bar U\supabr{i} \phi\supbr{i...n}$ and $y\supbr{i...n} =
\bar U_\supabr{i} \psi\supbr{i...n}$ verify $H_i$.
Note that the conditions are only tested by the Verifier for $i=1$. %

At stage $i$, let $\Lambda_{i}, \Phi_i$ and $\Psi_i$ be variables for the
random choices for $\lambda_{i}, \phi_i$ and $\psi_i$ and $Z_i$ be a variable
for the Prover's choice of $z_i$. Then $H_i$ in (\ref{eq:predicate}) %
expands as: %

\begin{equation}\label{eq:efgh}
\left\{
\begin{array}{lll}
x_{i} Z_{i} &=& \Big(d %
\Phi_i+\overbrace{\sum_{j=i+1}^{n}
C\supabr{i}_{1,j-i+1} \phi_j}^e\Big) \Lambda_{i} + a\Phi_i+f,\\
y_{i} Z_{i} &=& \Big(d %
\Psi_i+\underbrace{\sum_{j=i+1}^{n}
C\supabr{i}_{1,j-i+1} \psi_j}_g\Big) \Lambda_{i} + a\Psi_i+h,
\end{array}
\right.
\end{equation}
where $d %
=C\supabr{i}_{1,1}$ and  $a=\sum_{k=i+1}^{n} \lambda_k C\supabr{i}_{k-i+1,1}$, or equivalently
\begin{equation}\label{eq:mat:i}
\begin{bmatrix}
-(d %
\Phi_i+e) & x_{i}
\\
-(d %
\Psi_i+g) & y_{i}
\end{bmatrix}
\begin{bmatrix}
\Lambda_{i} \\ Z_{i}
\end{bmatrix}
=
\begin{bmatrix}
a\Phi_i+f\\ a\Psi_i+h
\end{bmatrix}.
\end{equation}

Suppose now that $A$ is not in generic rank profile, and let $i_0$ be minimal
such that the leading $i_0\times i_0$ minor of $A$ is equal to~$0$. %
On any corresponding partial LU decomposition this means that $d %
=C\supabr{i_0}_{1,1} = 0$.
Furthermore, because $A$ is assumed to be non-singular,
there exist indices $k_0$ with $2 \le k_0 \le n-i_0+1$, and $j_0$ with $2\le j_0 \le n-i_0+1$
such that $C\supabr{i_0}_{k_0,1} \ne 0$ and $C\supabr{i_0}_{1,j_0} \ne 0$.

We will now prove the two following statements:
\begin{enumerate}
\item $H_{i_0}$ is false with probability $\ge %
(1-1/|S|)^4$; \label{induc:bc}
\item If $H_{i+1}$ is false then $H_{i}$ is false with probability $\ge %
(1-1/|S|)^2$ for $1\leq i< i_0$. \label{induc:heredity}
\end{enumerate}
Informally, this means that the Prover cannot %
achieve $H_{i_0}$ with any choice
of returned values $x_1,\ldots,z_{i_0}$, with high probability and then this
failure propagates with high probability to $H_1$ which is checked by the Verifier.
By induction, this leads to a probability of $\ge %
(1-1/|S|)^{4+2(i_0-1)}$ that the
Verifier check will fail when the matrix $A$ is not in generic rank
profile. Since $A$ is non-singular, $i_0\leq n-1$, and therefore this
probability is %
$\ge (1-1/|S|)^{2n}$.

First, we prove %
Statement~\ref{induc:bc},
that is the case when $d=0$. %
The Verifier selects a random
$\lambda_{i_0}$, and then the Prover a~$z_{i_0}$.
If the coefficient matrix in (\ref{eq:mat:i})
is non-singular, there is a unique solution for $\Lambda_{i_0}$, which the Verifier
will choose with probability $\le 1/|S|$.
Otherwise, the coefficient matrix is singular and the only way for the system to
have a solution is that the determinant
\begin{equation*}
\Delta=
\left|\begin{array}{cc}-e& a\Phi_{i_0}+f\\ -g & a\Psi_{i_0} +h \end{array} \right| =
-e (a\Psi_{i_0} +h)+ g (a\Phi_{i_0}+f)
\end{equation*}
is equal to~$0$, %
which exactly happens in the three following cases:
\begin{compactenum}[a.]
\item $\begin{bmatrix}e&g\end{bmatrix} = \begin{bmatrix}0&0\end{bmatrix}$, which
happens with probability $\leq 1/|S|^2 %
$ as $C\supabr{i_0}_{1,j_0} \ne 0$;
\item $a=0$ (which happens with probability $\leq 1/|S|$ as $C\supabr{i_0}_{k_0,1} \ne 0$) and
$\left|\begin{array}{cc}-e & f\\ -g & h \end{array}\right| = 0 %
$;
\item otherwise, $ea\neq 0$ or $ga\neq 0$ and $\Delta$ is a nonzero polynomial
of degree 1 in $\Phi_{i_0},\Psi_{i_0}$ and
evaluates to $0$ for the random choices $\phi_{i_0},\psi_{i_0}$ %
with probability $\leq 1/|S|$.
\end{compactenum}
Overall, $H_{i_0}$ is false with probability
\begin{equation*}
\ge \big(1 - \frac{1}{|S|^2}\big) \big(1-\frac{1}{|S|}\big)^3
\ge \big(1 - \frac{1}{|S|}\big)^4 \ge 1-\frac{4}{|S|}
\end{equation*}
based on the random choices of the Verifier:
$\phi_{j_0},\psi_{j_0}$ yield
$\begin{bmatrix}e & g\end{bmatrix}\neq \begin{bmatrix}0&0\end{bmatrix}$;
$\lambda_{k_0}$ yields $a \ne 0$;
$\phi_{i_0},\psi_{i_0}$ yield $\Delta\neq 0$;
$\lambda_{i_0}$ avoids the unique solution to (\ref{eq:mat:i}).

For Statement%
~\ref{induc:heredity}, consider 
the predicate $H_i$ (\ref{eq:predicate}) at $i<i_0$, that is $d\ne 0$. %
Similarly, if the coefficient matrix in (\ref{eq:mat:i})
is non-singular, there is a unique solution for $\Lambda_{i}$, which the Verifier
will choose with probability $\le 1/|S|$. Otherwise, the coefficient matrix is singular and the only way for the system to
have a solution is that the following determinant is equal~$0$:
\begin{equation*}
0 = %
\Delta=
\left|\begin{array}{cc}-(d %
\Phi_i+e)& a\Phi_{i}+f\\
-(d %
\Psi_i+g) & a\Psi_{i} +h \end{array} \right| =
 (d %
f-ae)\Psi_i -(d %
h-ag)\Phi_i -eh+ g f
.%
\end{equation*}
We block decompose the bottom right block in the incomplete right factor in (\ref{eq:partialLU}) %
$C\supabr{i}= \begin{bmatrix}d %
&r^T %
\\
s &W\end{bmatrix}$, where
$d %
=C\supabr{i}_{1,1}\neq 0$. %
We have $C\supabr{i+1}=W-\frac{1}{d}s r^T%
$.
Now since $a=(\lambda\supbr{i+1\dots n})^T%
s, e=r^T %
\phi\supbr{i+1\dots n}$, we have
$ae=(\lambda\supbr{i+1\dots n})^T%
sr^T%
\phi\supbr{i+1\dots n}$
and
\begin{eqnarray*}
f-\frac{ae}{d %
}&=&
(\lambda\supbr{i+1\dots n})^TC\supabr{i+1}\phi\supbr{i+1\dots n}-(z\supbr{i+1\dots n})^Tx\supbr{i+1\dots n}\\
&=&\eta_{i+1}-(z\supbr{i+1\dots n})^Tx\supbr{i+1\dots n}.
\end{eqnarray*}
Similarly, $h-\frac{ag}{d %
} = \xi_{i+1}-(z\supbr{i+1\dots n})^Tx\supbr{i+1\dots n}$,
and these two quantities are not equal to~$0$ %
simultaneously%
, for %
otherwise $H_{i+1}$ would be true.
Therefore $\Delta$ is a nonzero polynomial of degree $1$ in $\Phi$ and
$\Psi$. It is equal to~$0$ %
with probability $\leq 1/|S|$. Overall, $H_i$ is false with
probability $\ge %
(1-1/|S|)^2$ based on the random choices for $\lambda_i, \phi_i$
and $\psi_i$ made by the Verifier.

\newif\ifericharg\erichargfalse
\ifericharg
\input{ek_grp.tex}
\fi

Finally, for the complexity, the Prover needs one Gaussian elimination to
compute $LU$ in time $O(n^\omega)$, then her extra work is just three triangular
solve in $O(n^2)$. The extra communication is six vectors, $\phi,\psi,\lambda,
x, y, z$, and the
Verifier's work is four dot-products and one multiplication by the initial
matrix~$A$
(certifying the transposed to have a single matrix times $\lambda$-vector product). %
\end{proof}

\subsection{LDUP decomposition}\label{ssec:pldu}
With Protocol~\ref{cert:GRP}, when the matrix $A$ does not have
generic rank profile, any attempt to prove 
that it has generic rank profile will be detected w.h.p. (soundness).
However when it is the case, the verification will accept many possible vectors
$x,y,z$: any scaling of $z_i$ by $\alpha_i$ and $x_i,y_i$ by
$1/\alpha_i$ would be equally accepted for any non zero constants~$\alpha_i$.
This slack corresponds to our lack of specification of the diagonals
in the used LU decomposition.
Indeed, for any diagonal matrix with non zero elements, $LD\times
D^{-1}U$ is also a valid LU decomposition and yields $x,y$ and $z$
scaled as above. Specifying these diagonals is not necessary to prove
generic rank profileness, so we left it as is for this task.

However, for the determinant or the rank profile matrix certificates of
Sections~\ref{sec:det} and~\ref{sec:interractive:RPM}, we will need to
ensure that this scaling is independent from the choice of the vectors
$\phi,\psi,\lambda$. Hence we propose an updated protocol, where $L$ has to be
unit triangular, and the Prover %-> prover
has to first commit the 
main diagonal $D$ of $U$.

\newcommand{\xpart}{\overline{x}} %
\newcommand{\ypart}{\overline{y}}
\newcommand{\zpart}{\overline{z}}
For a non-singular upper triangular matrix $U$ with diagonal
$D=\text{Diag}(d_1,\ldots,d_n)$, the matrix $U_1=D^{-1}U$ is unit triangular.
Thus, for any
$
\psi=
\begin{bmatrix}\psi_1 \\ \widetilde{\psi}\end{bmatrix} %
\in \F^n\colon$ %
$U\psi=DU_1\psi=D\left(\psi+\begin{smatrix}\widetilde{U_1}\widetilde{\psi}\\0\end{smatrix}\right)$,
where $\widetilde{U_1} = (U_1 - I_n)_{\{1,\ldots,n-1\},\{2,\ldots,n\}}$ %
upper triangular in $\F^{(n-1)\times (n-1)}$. %
So the idea is that the Prover will commit $D$ beforehand,
and that within a generic rank profile certificate, the Verifier will only
communicate 
$\widetilde{\phi}, \widetilde{\psi}$ and $\widetilde{\lambda}$ to obtain
$\zpart=\widetilde{\lambda}^T\widetilde{L}$, $\xpart=\widetilde{U}_1\widetilde{\phi}$ and $\ypart=\widetilde{U}_1\widetilde{\psi}$,
where $\widetilde{L} = (L - I_n)_{\{2,\ldots,n\},\{1,\ldots,n-1\}}$ %
lower triangular in $\F^{(n-1)\times (n-1)}$. %
Then the Verifier will compute by himself the complete vectors.
This ensures that $L$ is unit triangular and that $U=DU_1$ with $U_1$ unit triangular.

Finally, if an invertible matrix does not have generic rank profile, we note
that it is also possible to incorporate the permutations, by
committing them in the beginning and reapplying them to the matrix
during the checks. 
The full certificate is given in Protocol~\ref{cert:pldu}.

\begin{protocol}[htbp]
  \centering
  {\small
  \begin{tabular}{|l c ll|}
    \hline\rule{0pt}{12pt}   
     Prover & & \hspace{15pt}Verifier&\\
      \multicolumn{4}{|c|}{\hspace{-25pt}$A \in \mathbb{F}^{n{\times}n}$ non-singular} \\
    \hdashline\rule{0pt}{12pt}\ignorespaces%
    $A=LDU_1 %
P$ & $\fxrightarrow{P,D}$ & \multicolumn{2}{l|}{$P\checks{\in}\mathcal{S}_n$, $D\checks{\in}\mathcal{D}_n(\F^*)$}
\\
$\widetilde{U_1} = (U_1 - I_n)_{\{1,\ldots,n-1\},\{2,\ldots,n\}}$ %
&&&\\
$\widetilde{L} = (L - I_n)_{\{2,\ldots,n\},\{1,\ldots,n-1\}}$ %
&&&\\
    \hdashline\rule{0pt}{12pt}   
    &&\multicolumn{2}{l|}{Choose $S\subset \F$}\\
    &%
&\multicolumn{2}{l|}{\textbf{for} $i$ from $n$ downto $2$:}\\
&$\vdots$&& %
\\
$\begin{bmatrix}\xpart&\ypart\end{bmatrix}
=\widetilde{U}_1
\begin{bmatrix}\widetilde{\phi}&\widetilde{\psi}\end{bmatrix}$
& $\fxleftarrow{\phi_i,\psi_i}$
&\hspace{2em} ($\phi_i,\psi_i$)%
&$\sample{S^2}$
\\
     & $\fxrightarrow{\xpart_{i-1},\ypart_{i-1}}$&  &\\ 
    $\zpart=\widetilde{\lambda}^T\widetilde{L}$ & $\fxleftarrow{\lambda_i}$ &\hspace{2em} $\lambda_i$&$\sample{S}$  \\
    & $\fxrightarrow{\zpart_{i-1}}$ &  &\\
   & $\vdots$& &\\
    \hdashline\rule{0pt}{12pt}    
    && $\phi_1,\psi_1,\lambda_1$&$\sample{S^3}$ \\ 
    && $\begin{bmatrix}x&y\end{bmatrix}$&$=\begin{bmatrix}\phi&\psi\end{bmatrix}+\begin{bmatrix}\xpart&\ypart\\0&0\end{bmatrix}$ \\
    && $z^T$&$=\left(\lambda^T+\begin{bmatrix}\zpart^T&0\end{bmatrix}\right)$ \\
    && $z^TD \begin{bmatrix}x&y\end{bmatrix} $&$\checks{=} (\lambda^T A)P^T \begin{bmatrix}\phi&\psi\end{bmatrix}$ \\
    \hline
  \end{tabular}
  \caption{\mbox{LDUP} decomposition (linear communication)}
  \label{cert:pldu}
  }
\end{protocol}

\begin{theorem}\label{thm:pldu}

The Protocol~\ref{cert:pldu} requires less than $8n$ extra communications.
The computational cost for the Prover is $\GO{n^\omega}$ and the Verifier cost
is bounded by $\mu(A)+12n+o(n)$. 
The protocol is perfectly complete and fails the verification for a non generic
rank profile matrix
$AP^{-1} = AP^T$ %
with probability $\ge (1-\frac{1}{|S|})^{2n}$.
\end{theorem}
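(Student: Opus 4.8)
The plan is to treat Theorem~\ref{thm:pldu} in two parts: the four quantitative/qualitative assertions (communication size, Prover cost, Verifier cost, perfect completeness) are settled by direct accounting, while soundness is obtained by recognizing Protocol~\ref{cert:pldu} as a constrained instance of the generic-rank-profile certificate (Protocol~\ref{cert:GRP}) run on the permuted matrix $M = AP^{T}$, so that the soundness half of Theorem~\ref{th:GRP} can be transported (or its induction replayed) with the committed diagonal $D$ absorbed into the lower factor.

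First the bookkeeping. The opening commitment $(P,D)$ costs $2n$ indices/field elements; each of the $n-1$ rounds $i=n,\dots,2$ sends $\phi_i,\psi_i,\lambda_i$ downward and $\xpart_{i-1},\ypart_{i-1},\zpart_{i-1}$ upward, i.e. $6(n-1)$ elements, whereas $\phi_1,\psi_1,\lambda_1$ are drawn locally and never transmitted, for a total of $2n+6(n-1)=8n-6<8n$. For the Prover, one Gaussian elimination yields $A=LDU_1P$ in $O(n^\omega)$, and the streaming answers $\xpart=\widetilde{U_1}\widetilde{\phi}$, $\ypart=\widetilde{U_1}\widetilde{\psi}$, $\zpart^T=\widetilde{\lambda}^T\widetilde{L}$ are three triangular applications in $O(n^2)$. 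For the Verifier the only superlinear step is the single transposed product $\lambda^T A$ at cost $\mu(A)$; checking $P\in\mathcal{S}_n$ and $D\in\mathcal{D}_n(\F^*)$, reconstructing $x=\phi+\begin{smatrix}\xpart\\0\end{smatrix}$, $y$, $z$, forming $z^TD$, applying $P^T$, and taking the four length-$n$ dot products are linear, bounded by $12n+o(n)$. Perfect completeness is a one-line substitution: with $z^T=\lambda^TL$, $x=U_1\phi$, $y=U_1\psi$ and $A=LDU_1P$ one has $z^TD\begin{bmatrix}x&y\end{bmatrix}=\lambda^TLDU_1\begin{bmatrix}\phi&\psi\end{bmatrix}=(\lambda^TA)P^T\begin{bmatrix}\phi&\psi\end{bmatrix}$, so an honest Prover always passes.

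For soundness, set $M=AP^{T}=AP^{-1}$ and assume $M$ is not in generic rank profile. Since $D\in\mathcal{D}_n(\F^*)$ is diagonal, invertible, and committed before any challenge, absorbing it into the lower factor via $\bar z^{\,T}=z^TD$ turns the Verifier's test $z^TD\begin{bmatrix}x&y\end{bmatrix}=(\lambda^TA)P^T\begin{bmatrix}\phi&\psi\end{bmatrix}$ into exactly the generic-rank-profile test $\bar z^{\,T}\begin{bmatrix}x&y\end{bmatrix}=\lambda^TM\begin{bmatrix}\phi&\psi\end{bmatrix}$ of Protocol~\ref{cert:GRP} on $M$, for the factorization $M=(LD)U_1$ with lower-triangular factor $LD$ and unit upper-triangular factor $U_1$. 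As each $d_i\neq0$, the map $z\mapsto Dz$ is a bijection of the Prover's free choices, so a dishonest Prover here is no stronger than one in Protocol~\ref{cert:GRP} on $M$; she is in fact strictly weaker, since $L,U_1$ are forced unit triangular and $\phi_1,\psi_1,\lambda_1$ are never revealed. I would therefore either invoke Theorem~\ref{th:GRP} on $M$, or, to keep the fixed diagonal explicit, replay its induction: define $\eta_i,\xi_i$ and the predicates $H_i$ as in \eqref{eq:bilinC}--\eqref{eq:predicate} for the partial LU decomposition of $M$ with $U=DU_1$, take $i_0$ minimal with $C\supabr{i_0}_{1,1}=0$, show $H_{i_0}$ false with probability $\ge(1-1/|S|)^4$ (using nonsingularity of $M$ to find $C\supabr{i_0}_{k_0,1}\neq0$ and $C\supabr{i_0}_{1,j_0}\neq0$), and show that falsity of $H_{i+1}$ forces falsity of $H_i$ with probability $\ge(1-1/|S|)^2$ for $i<i_0$; telescoping with $i_0\le n-1$ gives $\ge(1-1/|S|)^{2n}$ at the checked predicate $H_1$.

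The main obstacle is the fine print of this reduction rather than any inequality: I must confirm that committing $P$ and $D$ \emph{before} the challenges, and letting the Verifier \emph{reconstruct} the index-$1$ entries of $x,y,z$ from already-committed partial responses instead of receiving them, only constrains the Prover and opens no new way to pass. Concretely the argument must check that the degree-one Schwartz--Zippel steps and the $2\times2$ coefficient-matrix analysis of \eqref{eq:mat:i} survive once $D$ multiplies the $x,y$ columns (equivalently once the pivot $d=C\supabr{i}_{1,1}$ is read off the committed $U=DU_1$), and that the off-by-one streaming of $\xpart_{i-1},\ypart_{i-1},\zpart_{i-1}$ preserves the causal order of Protocol~\ref{cert:GRP}, so that at each stage the decisive Prover value is still fixed before the governing challenge is drawn.
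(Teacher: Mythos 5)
Your proposal is correct and follows essentially the same route as the paper: perfect completeness by direct substitution $z^TD\begin{bmatrix}x&y\end{bmatrix}=\lambda^TLDU_1\begin{bmatrix}\phi&\psi\end{bmatrix}=(\lambda^TA)P^T\begin{bmatrix}\phi&\psi\end{bmatrix}$, the same communication and operation counts ($8n-6$ and $12n-7$), and soundness by viewing the final check as the generic-rank-profile test of Protocol~\ref{cert:GRP} applied to $AP^T$ with the factorization $L^*DU_1^*$, where the Verifier-side diagonal multiplication and reconstruction of the unit-diagonal entries only further constrain the Prover, so Theorem~\ref{th:GRP} transports directly. Your closing paragraph flagging the causal-order and $D$-absorption fine print is in fact more explicit than the paper, which asserts this reduction without elaboration.
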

\begin{proof}
If the Prover is honest, then $A=LUP=LDU_1P$, so that for any choice of $\lambda$
and $\psi$ we have:
$\lambda^TAP^{T}\psi=\lambda^T LDU_1 \psi$, that is:
\[\begin{split}
z^T D y& = (\lambda^T+\begin{bmatrix}\zpart^T&0\end{bmatrix})D\left(\psi+\begin{bmatrix}\ypart\\0\end{bmatrix}\right)\\
& =
\begin{bmatrix}\lambda_1&\widetilde{\lambda}^T\end{bmatrix}
\left(I+\begin{smatrix}0&0\\\widetilde{L}&0\end{smatrix}\right)D\left(\begin{smatrix}0&\widetilde{U_1%
}\\0&0\end{smatrix}+I\right)\begin{bmatrix}\psi_1\\\widetilde{\psi}\end{bmatrix}.
\end{split}\]

The same is true for $\lambda$ and $\phi$, so that the protocol is
perfectly complete.

Now, the last part of the Protocol~\ref{cert:pldu} is
actually a verification that $AP^T$ has generic rank
profile, in other words that there exists lower and upper triangular
matrices $L^*$ and $U^*$ such that $AP^T=L^*U^*$. This
verification is sound by Theorem~\ref{th:GRP}. 
Next, the multiplication by the diagonal $D$ is performed by the
Verifier, in order to be %
actually convinced that there exists lower and
upper triangular matrices $L^*$ and $U_1^*$ such that
$AP^T=L^* D U_1^*$.   
Finally, the construction of the vectors with the form
$a+\begin{smatrix}\widetilde{b}\\0\end{smatrix}$ is also done by the
Verifier, in order to have in fact %
a guarantee that $L^*$ and $U_1^*$ are
unit triangular. 

Overall, if the matrix
$AP^T$ %
does not have generic rank profile, the Verifier will
catch him with the probability of Theorem~\ref{th:GRP}. 

Finally, for the complexity bounds, the extra communications are:
one permutation matrix $P$, a diagonal matrix $D$ and $6$ vectors 
$\widetilde{\lambda}$, $\widetilde{\phi}$, $\widetilde{\psi}$ and 
$\zpart$, $\xpart$ and $\ypart$. 
That is $n$ non-negative integers lower
than $n$ and $6(n-1)+n$ field elements. The arithmetic computations of the
Verifier are one multiplication by a diagonal matrix, $3$ vector sums, $4$
dot-products and one vector-matrix %
multiplication by $A$ (for
$(\lambda^TA)$), that is $n+3(n-1)+4(2n-1)$. 
\end{proof}

We do not need the following %
fact
to show that Protocol~\ref{cert:pldu} correctly
verifies generic rank profileness, but furthermore, this protocol actually gives
some guarantees on the actual values of $D$ and $x,y,z$:
\begin{proposition}\label{cor:uniqueXY}
  Let $S$ be a finite subset of $\F$ %
in Protocol~\ref{cert:pldu}, 
  if $AP^T$ is not in generic rank profile, 
  or else 
  if the committed $D$ does not correspond to the unique decomposition $AP^T=LDU_1$ 
  or
  $\begin{bmatrix}x&y\end{bmatrix}\neq{}U_1\begin{bmatrix}\phi&\psi\end{bmatrix}$ 
  or 
  $z^T\neq{}\lambda^TL$,
  then the verification will fail with probability $\ge (1-\frac{1}{|S|})^{2n}$,
  and therefore Protocol~\ref{cert:pldu} is sound.
\end{proposition}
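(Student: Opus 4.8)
The plan is to split along the two alternatives in the hypothesis and to reduce as much as possible to the machinery already developed for generic rank profile-ness. First I would dispose of the case where $AP^T$ is not in generic rank profile: here nothing new is required, since Theorem~\ref{thm:pldu} already guarantees that the verification of Protocol~\ref{cert:pldu} rejects such a matrix with probability at least $(1-\frac{1}{|S|})^{2n}$, its final test being exactly a generic rank profile check on $AP^T$ with the diagonal pulled out. So from then on I would assume $AP^T$ is in generic rank profile, and the remaining task is to show that the committed diagonal and the constructed vectors are forced to coincide with the honest ones.

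Under this assumption $AP^T$ admits a \emph{unique} decomposition $AP^T = LDU_1$ with $L$ unit lower triangular, $D=\mathrm{Diag}(d_1,\dots,d_n)$ invertible diagonal and $U_1$ unit upper triangular. I would set the honest responses $\hat z^T=\lambda^T L$, $\hat x=U_1\phi$, $\hat y=U_1\psi$, which satisfy $\hat z^T D\hat x=\lambda^T AP^T\phi$ and $\hat z^T D\hat y=\lambda^T AP^T\psi$ identically in the challenges, while the Verifier tests $z^T D' x=\lambda^T AP^T\phi$ and $z^T D' y=\lambda^T AP^T\psi$ for the committed diagonal $D'$ and the vectors $x,y,z$ it builds. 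The two structural features I would exploit are: (i) the Verifier forces $x=\phi+[\xpart;0]$, $y=\psi+[\ypart;0]$ and $z^T=\lambda^T+[\zpart^T,0]$, so that $x_n=\phi_n=\hat x_n$, $y_n=\psi_n=\hat y_n$, $z_n=\lambda_n=\hat z_n$, and each deviation $x_{i-1}-\phi_{i-1}=\xpart_{i-1}$ (resp.\ $z_{i-1}-\lambda_{i-1}=\zpart_{i-1}$) is committed \emph{strictly before} the challenge $\phi_{i-1}$ (resp.\ $\lambda_{i-1}$) is revealed; and (ii) $D'$ is committed before any challenge at all.

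The heart of the argument is a downward induction on the index showing that whenever one of the listed deviations occurs at some level $i$ --- a wrong diagonal entry $d'_i\neq d_i$, or $x_i\neq\hat x_i$, $y_i\neq\hat y_i$, or $z_i\neq\hat z_i$ --- the verification is driven to reject. I would take $i$ maximal with such a deviation, condition on all challenges of larger index (for which, by maximality, the honest tail values are already in force), and expand the two verification equations isolating the freshly drawn $\phi_i,\psi_i,\lambda_i$. Exactly as in the singular analysis inside the proof of Theorem~\ref{th:GRP}, the deviation turns at least one equation into a nonzero degree-one polynomial in a challenge coordinate against which the Prover had to commit before seeing it; by the DeMillo--Lipton/Schwartz/Zippel lemma it evaluates to zero with probability at most $\frac{1}{|S|}$. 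Accumulating these factors over the $2n$ scalar challenges, precisely as in the count already obtained for generic rank profile-ness, yields a rejection probability at least $(1-\frac{1}{|S|})^{2n}$. Uniqueness of $AP^T=LDU_1$ is exactly what guarantees that $\hat x,\hat y,\hat z$ together with the true $D$ form the only deviation-free possibility, so the unit-triangular factors $L^*,U_1^*$ and diagonal $D'$ certified in Theorem~\ref{thm:pldu} must in fact be $L,U_1,D$; the closing soundness claim is the special case in which the deviation is that $AP^T=L^*D'U_1^*$ fails to hold at all.

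The hard part will be the bookkeeping of the inductive step: one must align the commit-order of (i)--(ii) with the two bilinear verification identities so that, at level $i$, the relevant error term is genuinely a nonzero affine expression in a challenge coordinate the Prover has not yet seen, and one must check that a wrong up-front diagonal entry $d'_i\neq d_i$ cannot be compensated by an adaptive choice of $x_i,y_i,z_i$ without creating such a detectable term. Care is likewise needed to confirm that tracking $D$ and the full vectors, rather than only generic rank profile-ness, does not degrade the probability below $(1-\frac{1}{|S|})^{2n}$ --- that is, that these stronger guarantees are enforced by the very same tests whose failure probabilities were already accounted for in Theorem~\ref{th:GRP}.
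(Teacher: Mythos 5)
Your proposal follows essentially the same route as the paper's own proof: dispose of the non-generic-rank-profile case via Theorem~\ref{thm:pldu}, then take the \emph{largest} index $i_0$ at which the committed $d_{i_0},\xpart_{i_0},\ypart_{i_0},\zpart_{i_0}$ deviate from the honest values, observe that $H_{i_0+1}$ holds by maximality, use the commit-before-challenge ordering to turn the deviation into a nonzero affine polynomial in $\lambda_{i_0},\phi_{i_0},\psi_{i_0}$ that vanishes with probability at most $1/|S|$ (with the wrong-diagonal case handled separately, exactly as the paper does via the coefficient of $\lambda_{i_0}$), and propagate the failure down to $H_1$ with the induction of Theorem~\ref{th:GRP}. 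The bookkeeping you flag as the hard part is precisely what the paper carries out in Equations~(\ref{eq:zixi})--(\ref{eq:lphilpsi}), arriving at the same bound $\ge(1-1/|S|)^{2n}$.
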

\begin{proof} For a dishonest Prover,
  either
  \begin{compactenum}[(i)]
  \item \label{case:GRP} $AP^T %
$ is not in generic rank profile, then Protocol~\ref{cert:pldu} will detect it with the probability of Theorem~\ref{thm:pldu};
  \item \label{case:discr} or she could still try,
to send modified vectors $\xpart$, $\ypart$, $\zpart$ or diagonal $D$.
  \end{compactenum}
  Let then 
  $D^*$, 
  $x^*=\phi+\begin{smatrix}\xpart^*\\0\end{smatrix}=U_1\phi$, 
  $y^*=\psi+\begin{smatrix}\ypart^*\\0\end{smatrix}=U_1\psi$ and
  $z^*=\begin{smatrix}\zpart^*\\0\end{smatrix}+\lambda=L^T\lambda$ be the
  correct expected diagonal and vectors.
  Let also $i_0\le n$ %
be the largest index such that there is at least one discrepancy in
  $d_{i_0}$, $\xpart_{i_0}$, $\ypart_{i_0}$ or $\zpart_{i_0}$ that makes at least one of 
  them respectively different from $d^*_{i_0}$, $\xpart^*_{i_0}$, $\ypart^*_{i_0}$ or %
  $\zpart^*_{i_0}$
  ($\xpart_n=\xpart^*_n = 0,\ypart_n=\ypart^*_n = 0,\zpart_n=\zpart^*_n = 0$ by default). %
  Then $H_i$ of (\ref{eq:predicate}) %
is true for all $i$ such that $n\ge i>i_0$, and thus in particular
  $H_{i_0+1}$ is true
($H_{n+1}$ is true by default). %
  Now, $H_{i_0}$ is also true if and only if we have both:
  \begin{equation}\label{eq:zixi}
    \begin{cases}
      z_{i_0}d_{i_0}x_{i_0}=z^*_{i_0}d^*_{i_0}x^*_{i_0},%
\\
      z_{i_0}d_{i_0}y_{i_0}=z^*_{i_0}d^*_{i_0}y^*_{i_0}.
    \end{cases}
  \end{equation}
  Indeed, $H_{i_0}$ is $(z\supbr{i_0\dots n})^T %
  D\supbr{i_0\dots n} x\supbr{i_0\dots n}=({z^*}\supbr{i_0\dots n})^T %
  D\supbr{i_0\dots n}{x^*}\supbr{i_0\dots n}$ %
  and similarly $H_{i_0+1}$ is $(z\supbr{i_0+1\dots n})^T  %
  D\supbr{i_0+1\dots n} x\supbr{i_0+1\dots n}=({z^*}\supbr{i_0+1\dots n})^T %
  D\supbr{i_0+1\dots n} {x^*}\supbr{i_0+1\dots n}$.
  Further, Equations~(\ref{eq:zixi}),
  with
  $a=\zpart^*_{i_0}d^*_{i_0}\xpart^*_{i_0}-\zpart_{i_0}d_{i_0}\xpart_{i_0}$,
  and 
  $b=\zpart^*_{i_0}d^*_{i_0}\ypart^*_{i_0}-\zpart_{i_0}d_{i_0}\ypart_{i_0}$,
  is equivalent to:
  \begin{equation}\label{eq:lDphilpsi}
    \begin{cases}
      \lambda_{i_0}\phi_{i_0}(d_{i_0}-d^*_{i_0})+
      \lambda_{i_0}(d_{i_0}\xpart_{i_0}-d^*_{i_0}\xpart^*_{i_0})+
      \phi_{i_0}(d_{i_0}\zpart_{i_0}-d^*_{i_0}\zpart^*_{i_0})-a=0,%
\\
      \lambda_{i_0}\psi_{i_0}(d_{i_0}-d^*_{i_0})+
      \lambda_{i_0}(d_{i_0}\ypart_{i_0}-d^*_{i_0}\ypart^*_{i_0})+
      \psi_{i_0}(d_{i_0}\zpart_{i_0}-d^*_{i_0}\zpart^*_{i_0})-b=0.%
    \end{cases}
  \end{equation}
  However, $\lambda_{i_0},\phi_{i_0},\psi_{i_0}$ are chosen by the Verifier after
  $d_{i_0}$, $\xpart_{i_0}$, $\ypart_{i_0}$ and $\zpart_{i_0}$ have been committed.
  Hence, on the one hand, if $d_{i_0}\neq d^*_{i_0}$
then the coefficient of $\lambda_{i_0}$
in one of the two polynomials is not equal to $0$ %
for a random $\phi_{i_0}$ with probability $\ge 1-1/|S|^2$ %
and then that %
polynomial
does not vanish for a random $\lambda_{i_0}$ with probability
$\ge (1-1/|S|^2)(1-1/|S|)$, %
  based on the random choices made by the Verifier, and 
$H_{i_0}$ is violated.

  On the other hand, if $d_{i_0}=d^*_{i_0}\neq 0$, they can be removed from
  Equations~(\ref{eq:lDphilpsi}) which then simplifies
(for $i_0 < n$) %
as: 
  \begin{equation}\label{eq:lphilpsi}
    \begin{cases}
      \lambda_{i_0}(\xpart_{i_0}-\xpart^*_{i_0})+\phi_{i_0}(\zpart_{i_0}-\zpart^*_{i_0})-(\zpart^*_{i_0}\xpart^*_{i_0}-\zpart_{i_0}\xpart_{i_0})=0,%
\\ %
      \lambda_{i_0}(\ypart_{i_0}-\ypart^*_{i_0})+\psi_{i_0}(\zpart_{i_0}-\zpart^*_{i_0})-(\zpart^*_{i_0}\ypart^*_{i_0}-\zpart_{i_0}\ypart_{i_0})=0.%
    \end{cases}%
  \end{equation}
  When there is at least one discrepancy with the expected vector coefficients, then
  Equations~(\ref{eq:lphilpsi}) can be considered as 
  $2$ polynomials that are not simultaneously identically zero. %
  Thus they both vanish with probability $\leq 1/|S|$ %
  based on the random choices made by the Verifier.
  $H_{i_0}$ is thus false with probability $\ge (1-1/|S|)$. 
  As in the proof of Theorem~\ref{th:GRP}, this propagates with high probability, %
to $H_1$ and
  the dishonest Prover is detected with probability 
  $\ge (1-1/|S|)^{2(n-1)} (1-1/|S|^2)(1-1/|S|) %
\ge (1-1/|S|)^{2n}$.
 
  Overall, both (\ref{case:GRP}), $AP^T %
$ is not GRP, or
  (\ref{case:discr}), $AP^T%
$ is GRP but some diagonal or vector elements is wrong, are detected with
  probability $\ge (1-1/|S|)^{2n}$.
\end{proof}

\section{Linear communication interactive certificates}
\label{sec:interractive}

In this section, we give linear space communication certificates for
the determinant,
the column/row rank profile of a matrix, and for the rank profile matrix.

\subsection{Linear communication certificate for the
  determinant}\label{sec:det}
Existing certificates for the determinant are either optimal for the Prover in
the dense case, using the strategy of \cite[Theorem~5]{kns11} over a PLUQ
decomposition, but quadratic in communication; or linear in communication, using
\cite[Theorem~14]{jgd:2016:gammadet}, but using a reduction to the
characteristic polynomial. 
In the sparse case the determinant and the characteristic polynomial both
reduce to the same minimal polynomial computations and therefore the latter
certificate is currently optimal for the Prover. Now in the dense case, while
the determinant and characteristic polynomial both reduce to matrix
multiplication, the determinant, via a single PLUQ decomposition is more
efficient in practice~\cite{Pernet:2007:charp}. 
Therefore, we propose here an alternative in
the dense case: use only one PLUQ decomposition for the Prover while keeping
linear extra communications and $O(n)+\mu(A)$ operations for the Verifier.
The idea is to extract the 
information of a LDUP decomposition without communicating it: 
one uses Protocol~\ref{cert:pldu} for $A=LDUP$ with $L$ and $U$
unitary, but kept on the Prover side, and then the Verifier only has to compute
$Det(A)=Det(D)Det(P)$, with $n-1$ additional field operations.%

\begin{corollary}\label{thm:det}
For an $n{\times}n$ matrix, there exists a sound and perfectly complete protocol
for the determinant over a field using less than $8n$ extra
communications and with computational cost for the Verifier bounded by
$\mu(A)+13n+o(n)$. 
\end{corollary}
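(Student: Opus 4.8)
The plan is to reduce Corollary~\ref{thm:det} directly to the already-established Protocol~\ref{cert:pldu} and its correctness statement in Theorem~\ref{thm:pldu}. The key observation is that once the Prover has committed to a factorization $A = LDU_1P$ with $L$ unit lower triangular and $U_1$ unit upper triangular, the determinant of $A$ is entirely determined by the committed data $D$ and $P$ that the Verifier already holds: indeed $\det(A) = \det(L)\det(D)\det(U_1)\det(P) = \det(D)\det(P)$, since unit triangular matrices have determinant $1$. So I would run Protocol~\ref{cert:pldu} verbatim, and then have the Verifier additionally output $\det(D)\det(P)$ as the certified determinant.

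First I would invoke Theorem~\ref{thm:pldu} to obtain that the Verifier, at the end of Protocol~\ref{cert:pldu}, is convinced with probability $\ge (1-1/|S|)^{2n}$ that $AP^T$ has generic rank profile and, via Proposition~\ref{cor:uniqueXY}, that the committed $D$ is the genuine diagonal of the (unique) decomposition $AP^T = LDU_1$ with $L, U_1$ unit triangular. This is precisely what is needed: the soundness of the determinant claim follows because a dishonest $D$ or $P$ is caught with the stated probability, while perfect completeness is inherited directly from Protocol~\ref{cert:pldu}, the extra determinant computation never introducing a rejection. The protocol is therefore sound and perfectly complete.

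For the complexity bounds I would argue as follows. The communication is exactly that of Protocol~\ref{cert:pldu}, namely less than $8n$ extra elements, since the Verifier computes $\det(D)=\prod_i d_i$ and $\det(P)=\pm1$ locally from data already transmitted, adding nothing to the exchange. The Prover cost is $\GO{n^\omega}$, inherited unchanged, as the single PLUQ elimination already yields $L,D,U_1,P$. For the Verifier, on top of the $\mu(A)+12n+o(n)$ operations of Theorem~\ref{thm:pldu}, computing the product of the $n$ diagonal entries of $D$ and multiplying by the sign of $P$ costs $n-1$ further multiplications, giving the stated $\mu(A)+13n+o(n)$.

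The step that merits the most care is making explicit that verifying generic rank profileness of $AP^T$ together with the committed diagonal $D$ genuinely pins down $\det(A)$, rather than merely some scaled variant; this is exactly the point that Proposition~\ref{cor:uniqueXY} was designed to resolve, since Protocol~\ref{cert:GRP} alone leaves a diagonal scaling ambiguity that would otherwise leave $\det(D)$ undetermined. I expect the main obstacle to be purely expository: one must cite Proposition~\ref{cor:uniqueXY} to rule out a Prover who sends a wrong $D$ consistent with generic rank profile but inconsistent with the true factorization, and confirm that fixing $L,U_1$ to be unit triangular removes the remaining freedom so that $\det(A)=\det(D)\det(P)$ is forced. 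Given that machinery, the corollary follows with essentially no additional computation.
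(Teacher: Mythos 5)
Your proposal is correct and follows essentially the same route as the paper: run Protocol~\ref{cert:pldu} verbatim, then have the Verifier compute $\det(A)=\det(D)\det(P)$ from the committed $P$ and $D$, with soundness for the committed diagonal resting on Proposition~\ref{cor:uniqueXY} and the extra $n-1$ multiplications accounting for the jump from $\mu(A)+12n+o(n)$ to $\mu(A)+13n+o(n)$. Nothing further is needed.
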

As a comparison, the protocol of \cite[Theorem~14]{jgd:2016:gammadet} reduces
to {\sc{CharPoly}} instead of PLUQ for the Prover, requires $5n$ extra
communications and $\mu(A)+13n+o(n)$ operations for the Verifier as well. 
Also the new protocol requires $3n$ random field elements for a field larger
than $2n$, where that of \cite[Theorem~14]{jgd:2016:gammadet} requires $3$
random elements but a field larger than $n^2$. Finally the new protocol requires
$O(n)$ rounds when $2$ are sufficient in \cite[Theorem~14]{jgd:2016:gammadet}.

For instance, using the routines shown in Table~\ref{tab:io} (one
matrix-vector multiplication with a dense matrix is denoted \texttt{fgemv}),
the determinant of an 
$50k{\times}50k$ random dense
matrix can be computed in about
24~minutes,
where with the certificate of Protocol~\ref{cert:pldu}, the overhead of the Prover is less than 
5s %
and the Verifier time is about
1s.%

Computations use the
FFLAS-FFPACK library~\cite{fflas:2017}
on a single Intel Skylake core @3.4GHz, while we measured some communications
between two workstations over an Ethernet Cat. 6, @1Gb/s network cable.
We see that a linear communication cost can be masked by a quadratic number of
computations, when a quadratic communication cost could be up to two orders of
magnitude worse.

\begin{table}[htbp]
  \centering
  \begin{tabular}{lrrr}
    \toprule
    Dimension		& $2k$& $10k$&$50k$ \\
    \midrule
    PLUQ		& 0.28s	& 17.99s	& 1448.16s\\
    {\sc{CharPoly}}	& 1.96s & 100.37s & 8047.56s\\
    \midrule
    Linear comm.	& 0.50s	& 0.50s	& 0.50s\\
    Quadratic comm.	& 1.50s	& 7.50s	& 222.68s\\%722Mb/s
    \midrule
    \texttt{fgemv}	& 0.0013s& 0.038s& 1.03s\\
    \bottomrule
  \end{tabular}
  \caption{Communication of 64 bit words versus computation modulo
    $131071$}\label{tab:io}
\end{table}

\subsection{Column or row rank profile certificate}
\label{sec:interractive:CRP}
In Protocols~\ref{cert:upper_rank} and~\ref{cert:lower_rank}, we first
recall the two linear time and space certificates for an upper and a
lower bound to the rank that constitute a rank certificate. 
We present
here the variant sketched in~\cite[\S~2]{eberly15} of the certificates
of~\cite{dk14}.  
An upper bound $r$ on the rank is certified by the capacity for the
Prover to generate any vector sampled from the image of $A$ by a linear
combination of $r$ column of $A$ ($\|\gamma\|_0 %
                                  $ denotes the
Hamming weight of the vector $\gamma$). 
A lower bound $r$  is certified by the capacity for the Prover to recover the
unique coefficients of a linear combination of $r$ linearly independent columns
of $A$. $\LINSYS(r)$ denotes a complexity bound for solving a linear system of
rank $r$ by the Prover.

\begin{protocol}[htbp]
    \centering
    \begin{tabular}{|l c l|}
        \hline
        Prover &  & Verifier \\
        & $ A \in \mathbb{F}^{m \times n} $ &  \\
        \hdashline\rule{0pt}{12pt}
        $R %
        $ s.t. $ \rank(A) \leq R $ & $\fxrightarrow{\text{$R$}} $ & \\
        \hdashline\rule{0pt}{12pt}
        & & Choose $S \subset \mathbb{F}$ \\
        & $\fxleftarrow{\text{$w$}}$ & $ v \sample{S^n}, w = Av $ \\
        \hdashline\rule{0pt}{12pt}
        $A\gamma = w$ & $\fxrightarrow{\text{$\gamma$}}$ & $\|\gamma\|_0 %
                        \stackrel{?}{=}R$\\
        & & $A\gamma \stackrel{?}{=} w $\\
        \hline
    \end{tabular}
    \caption{Upper bound on the rank of a matrix}
    \label{cert:upper_rank}
\vspace{\spaceafterprotocol}\end{protocol}

\begin{theorem}
    \label{th:upper_rank}
    Let $A \in \mathbb{F}^{m \times n}$, and let $S$ be a finite subset
    of $\mathbb{F}$.
    The interactive certificate~\ref{cert:upper_rank} of
    an upper bound for the rank of $A$ is sound, with probability larger than
    $1-\frac{1}{|S|}$, perfectly complete, occupies $m+n$ communication space,
    can be computed in $\LINSYS(r)$ and verified in $2\mu(A) + n$ time.
\end{theorem}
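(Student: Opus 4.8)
The plan is to establish the three claims---perfect completeness, soundness, and the stated complexities---separately, exploiting the fact that the Verifier accepts precisely when the Prover exhibits a vector $\gamma$ with $A\gamma=w=Av$ and $\|\gamma\|_0=R$. Such a $\gamma$ witnesses that $w$ lies in the span of the $R$ columns of $A$ indexed by $\text{supp}(\gamma)$; the whole argument revolves around turning this local span condition into a statement about $\rank(A)$.

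For perfect completeness, I would have an honest Prover with $\rank(A)\le R$ choose a set $\mathcal{J}$ of $R$ column indices whose span contains the image of $A$ (a column basis has at most $R$ elements, so padding with arbitrary extra columns always achieves $|\mathcal{J}|=R$), and solve the consistent system $A_{*,\mathcal{J}}\gamma_{\mathcal{J}}=w$, extending $\gamma$ by zeros off $\mathcal{J}$. Since $w=Av$ is in the image the system is solvable and $A\gamma=w$ holds identically, so the first check never fails. The point needing care is the Hamming-weight test $\|\gamma\|_0=R$ rather than $\le R$: I would use the freedom in the affine solution space (of dimension $R-\rank(A)$ once $\mathcal{J}$ contains a basis) to select a representative avoiding each of the $R$ coordinate hyperplanes, and I would flag the degenerate cases (for instance $w=0$) as the ones that must be inspected to preserve \emph{perfect} completeness.

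For soundness I would argue the contrapositive: assume the claimed bound is false, i.e. $\rank(A)>R$, yet the Verifier accepts. Put $\mathcal{J}=\text{supp}(\gamma)$, so $|\mathcal{J}|=R$ and $w=A\gamma\in\Span(A_{*,\mathcal{J}})$. Because $\rank(A_{*,\mathcal{J}})\le R<\rank(A)$, the span of $A_{*,\mathcal{J}}$ is a \emph{proper} subspace of the image, hence there is a Farkas-type left-null witness $u$ with $u^TA_{*,\mathcal{J}}=0$ but $u^TA\ne 0$. Then $u^Tw=(u^TA_{*,\mathcal{J}})\gamma_{\mathcal{J}}=0$ is forced, whereas $u^Tw=u^TAv$ is, as a function of the Verifier's hidden $v\sample{S^{n}}$, a nonzero degree-one polynomial, so it vanishes with probability at most $1/|S|$ by the DeMillo--Lipton/Schwartz/Zippel lemma; acceptance therefore forces a probability-$\le 1/|S|$ event, giving soundness $>1-1/|S|$. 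The main obstacle I anticipate lies exactly here: the support $\mathcal{J}$, and hence the witness $u$, is chosen by the Prover \emph{after} seeing $w$, so the clean single-subspace-avoidance argument must be reconciled with an adaptive choice among the finitely many supports; resolving this (by a uniform witness, a union bound, or a size condition on $S$) is where the real work sits.

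The complexity claims are then routine. The data exchanged are the bound $R$ (one integer), the challenge $w\in\F^{m}$, and the response $\gamma\in\F^{n}$, i.e. $m+n$ field elements of communication. The honest Prover's dominant cost is solving a single linear system of rank $r=\rank(A)$, namely $\LINSYS(r)$. The Verifier forms $w=Av$ and recomputes $A\gamma$, two matrix--vector products costing $2\mu(A)$, then compares the two vectors and counts the nonzero entries of $\gamma$ in $O(n)$ further operations, for a total of $2\mu(A)+n$.
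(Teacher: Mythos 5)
First, a point of comparison: the paper itself gives \emph{no} proof of Theorem~\ref{th:upper_rank}; the statement is recalled from \cite{dk14} and \cite{eberly15} and left unproved, so there is no in-paper argument to measure yours against. Your completeness and complexity paragraphs are fine, and your worry about the exact Hamming-weight test is legitimate: with $0\in S$, $A=\operatorname{diag}(1,1)$ and $R=2$, the challenge $v=(0,1)$ forces the unique response $\gamma=(0,1)$ of weight $1$, so \emph{perfect} completeness genuinely requires either reading the test as $\|\gamma\|_0\le R$ or sampling from $S\subset\F^*$.

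The genuine gap is exactly where you place it, and it cannot be waved away: in the protocol as drawn the Prover chooses $\mathcal{J}=\operatorname{supp}(\gamma)$ \emph{after} seeing $w$, so your Farkas witness $u$ depends on an adaptive choice among the $\binom{n}{R}$ possible supports, and the honest union bound yields failure probability $\binom{n}{R}/|S|$, not $1/|S|$. This is not merely a lossy estimate — the stated constant is unachievable for this protocol without extra hypotheses. Take $A=I_n$ and a cheating Prover claiming $R=n-1$: then $w=v$ and $\gamma$ is forced to equal $v$, so the Prover succeeds precisely when $v$ has Hamming weight $n-1$, which for $0\in S$ happens with probability about $n/|S|$, strictly larger than $1/|S|$. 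The standard repairs are (i) to have the Prover commit the column subset $\mathcal{J}$ \emph{before} receiving the challenge, as is done in the companion lower-bound Protocol~\ref{cert:lower_rank}; this makes $u$ non-adaptive and your DeMillo--Lipton/Schwartz/Zippel step then gives the clean $1-1/|S|$ bound; or (ii) to accept the union bound over supports and impose a field-size condition, which is consistent with the requirement $|S|=\Omega(\min\{m,n\}\log(mn))$ that the paper's own Table~\ref{SotA_matrix_rank} attaches to the certificates of \cite{dk14}. Your proof must commit to one of these routes; as written, its central soundness claim is not established.
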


\begin{protocol}[htbp]
    \centering
    \begin{tabular}{|l c l|}
        \hline
        \multicolumn{1}{|l}{Prover} &  & Verifier \\
        & $A\in\mathbb{F}^{m{\times}n}$ &  \\
        \hdashline\rule{0pt}{12pt}
        $\mathcal{J}=(c_1,.., c_\rho %
                     )$ indep. cols of A &
        $\fxrightarrow{\text{$c_1, .., c_\rho$}} $ & \\
        \hdashline\rule{0pt}{12pt}
        & & Choose $S \subset \mathbb{F^*}$ \\
        & $\fxleftarrow{v}$ &  $\alpha = \begin{dcases*}
          \alpha_{c_j} \sample{S}  \\ 0\text{ otherwise} \end{dcases*} $ \\
        & & $v = A\alpha$\\
        \hdashline\rule{0pt}{12pt}
        Solve $A\beta = v$ & $\fxrightarrow{\text{$\beta$}}$ & $\beta\checks{=}\alpha$\\
        \hline
    \end{tabular}
    \caption{Lower bound on the rank of a matrix}
    \label{cert:lower_rank}
\vspace{\spaceafterprotocol}\end{protocol}
\begin{theorem}
    \label{th:lower_rank}
    Let $A \in \mathbb{F}^{m \times n}$, and let $S$ be a finite subset
    of $\mathbb{F}$.
    The interactive certificate~\ref{cert:lower_rank} of a lower bound
    for the rank of $A$ is sound, , with probability larger than
    $1-\frac{1}{|S|}$, perfectly complete and occupies $m + 2r$ communication
    space, can be computed in $\LINSYS(r)$ and verified in $\mu(A)+r$ operations.
\end{theorem}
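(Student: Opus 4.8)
The plan is to establish the three asserted properties---perfect completeness, soundness, and the resource bounds---separately, reading Protocol~\ref{cert:lower_rank} as a ``unique reconstruction'' test in the spirit of \cite{dk14,eberly15}. For perfect completeness I would use that when the Prover is honest the committed indices $\mathcal{J}=(c_1,\dots,c_\rho)$ really do select linearly independent columns, so the submatrix $A_{*,\mathcal{J}}$ has full column rank. The Verifier's vector is $v=A\alpha=A_{*,\mathcal{J}}\,\alpha_{\mathcal{J}}$, and since $A_{*,\mathcal{J}}$ is injective, $\alpha_{\mathcal{J}}$ is the \emph{unique} vector with support in $\mathcal{J}$ solving $A\beta=v$. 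Solving that system on these columns therefore returns exactly $\beta=\alpha$, so the test $\beta\stackrel{?}{=}\alpha$ never fails.

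For soundness I would argue the contrapositive. A false statement means $\rho>\rank(A)$, so every set of $\rho$ columns---in particular the committed $\mathcal{J}$---is linearly dependent, and hence $K=\{w:\operatorname{supp}(w)\subseteq\mathcal{J},\,Aw=0\}$ is nonzero. Given $v$, the admissible answers supported on $\mathcal{J}$ form the coset $\alpha+K$, so $v$ does not determine $\alpha$: fixing a nonzero $u\in K$ and an index $c_k$ with $u_{c_k}\neq0$, the value $v$ is invariant under $\alpha\mapsto\alpha+tu$, so the coordinate $\alpha_{c_k}$ is hidden from a Prover who sees only $v$. Because the Verifier draws the entries of $\alpha$ on $\mathcal{J}$ uniformly and independently from $S$ \emph{after} $\mathcal{J}$ is committed, any single response the Prover fixes matches the true $\alpha$ with probability at most $1/|S|$---the same DeMillo--Lipton/Schwartz/Zippel-style step used for Protocol~\ref{cert:triangular_equivalence}---giving soundness with probability at least $1-1/|S|$. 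This probabilistic step is the main obstacle: one must argue carefully that the freedom in the kernel direction, sampled through the random entries on $\mathcal{J}$, genuinely makes the correct coefficient vector unrecoverable, and I would lean on the analyses of \cite{dk14,eberly15} for the precise bound rather than re-deriving it.

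For the resources I would simply count. The Prover sends the $\rho\,(\le r)$ indices and later the $\rho$ nonzero entries of $\beta$, while the Verifier sends $v\in\F^m$, for a total of at most $m+2r$ field elements (indices and field elements being charged equally). The Prover's only substantial work is to solve the rank-$r$ system $A\beta=v$, costing $\LINSYS(r)$, followed by negligible bookkeeping. The Verifier forms $v=A\alpha$ with a single matrix--vector product in $\mu(A)$ operations and then compares $\beta$ with $\alpha$ in $r$ equality tests, for $\mu(A)+r$ operations in total, matching the claimed bounds.
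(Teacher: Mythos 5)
The first thing to say is that the paper does not actually prove Theorem~\ref{th:lower_rank}: Protocols~\ref{cert:upper_rank} and~\ref{cert:lower_rank} are explicitly \emph{recalled} from \cite{dk14} and \cite{eberly15}, and the only justification the text supplies is the communication count ($r$ indices for $\mathcal{J}$, $m$ field elements for $v$, $r$ nonzero entries for $\beta$, hence $m+2r$). Your reading of the protocol as a unique-reconstruction test is the intended one, your completeness argument (injectivity of $A_{*,\mathcal{J}}$ forces $\beta=\alpha$) is correct, and your resource accounting agrees with the paper's note.

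The soundness step is where all the content lies, and the specific argument you offer does not close it. You claim that since $v=A\alpha$ is invariant under $\alpha\mapsto\alpha+tu$ for $u\in K=\ker A_{*,\mathcal{J}}$, the coordinate $\alpha_{c_k}$ is ``hidden'' and hence matched with probability at most $1/|S|$. But $\alpha+tu$ need not lie in $S^{\mathcal{J}}$: conditioned on $v$, the distribution of $\alpha$ is uniform on $(\alpha+K)\cap S^{\mathcal{J}}$, and this intersection can be a singleton, in which case $v$ determines $\alpha$ completely. The optimal cheating Prover succeeds with probability exactly $N/|S|^{\rho}$, where $N$ is the number of fibers of $\alpha\mapsto A\alpha$ on $S^{\rho}$ (one correct guess per fiber), and $N\le|S|^{\rho-1}$ is \emph{not} automatic for an arbitrary finite $S$. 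Concretely, if two committed columns are equal, the Prover learns $\alpha_{c_1}+\alpha_{c_2}$ from $v$; for $S=\{1,\dots,k\}$ in large characteristic there are $2k-1$ possible sums and the success probability is $(2k-1)/k^{2}>1/|S|$, and for a Sidon-type $S$ (all pairwise sums distinct up to order) the Prover recovers the unordered pair and succeeds with probability above $1/2$ independently of $|S|$. So the stated $1-1/|S|$ bound requires either structural hypotheses on $S$ (e.g.\ $S=\F$ or $S=\F^{*}$, where every coset of $K$ meets $S^{\rho}$ in at least $|S|-O(\rho)$ points) or the finer fiber-counting of \cite{dk14,eberly15}. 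You correctly identified this as the crux and deferred to those references, which is honest; but the heuristic you wrote down in its place is the one step that would fail under scrutiny, so the proposal as it stands does not constitute a proof of the quantitative soundness claim.
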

Note that the communication in Protocol~\ref{cert:lower_rank} involve sending
$r$ indices for $\mathcal{J}$, then $m$ field elements for vector $v$, and only
$r$ field elements for vector $\beta$, as it has only $r$ non-zero coefficients
which positions are already indicated by $\mathcal{J}$. Hence the total
communication cost is $m+2r$.

We now consider a column rank profile certificate: the Prover is given a matrix $A$, and answers
the column rank profile of $A$, $\mathcal{J} = (c_1, \dots, c_r)$. %
In order to certify this column rank profile, we need to certify two properties: 
\begin{compactenum}
\item the columns given by $\mathcal{J}$ are linearly independent; \label{CRP:step:indep}
\item the columns given by $\mathcal{J}$ form the lexicographically smallest set \label{CRP:step:minimal}
  of independent columns of $A$.
\end{compactenum}

Property~\ref{CRP:step:indep} is verified by Certificate~\ref{cert:lower_rank},
as 
it checks whether a set of columns are indeed linearly independent.
Property~\ref{CRP:step:minimal} could be certified by successive applications of Certificate~\ref{cert:upper_rank}: at step $i$, checking that the rank of $A_{*, (0, \dots, c_i - 1)}$
is at most $i-1$ would certify that there is no column located between $c_{i-1}$ and $c_i$ in $A$ 
which increases the rank of $A$. Hence, it would prove  the minimality of $\mathcal{J}$.
However, this method requires $O(nr)$  communication space.

Instead, one can reduce the communication by seeding all challenges from
a single $n$ dimensional vector, and by compressing the responses with a random
projection.
The right triangular equivalence certificate  plays here a central role,
ensuring the lexicographic minimality of $\mathcal{S}$.
More precisely, the Verifier chooses a vector $v \in \F^{n}$ uniformly at random
and sends it to the Prover.
Then, for each index $c_k\in\mathcal{S}$ the Prover computes the linear
combination of the first $c_k-1$  columns of $A$ using the first $c_k-1$
coefficients of $v$ and has to prove that it can be generated from the $k-1$
columns $c_1,\dots,c_{k-1}$. This means, find  a vector $\gamma^{(k)}$ solution to the system:
$$
\begin{bmatrix}
  A_{*,c_1} &   A_{*,c_2} & \dots &   A_{*,c_{k-1}}  
\end{bmatrix}
\gamma^{(k)}
=
A 
\begin{smatrix}
  v_1\\\vdots\\ v_{c_k-1}\\0\\\vdots
\end{smatrix}.
$$

Equivalently, find an upper triangular matrix $\Gamma$ such that:
\begin{equation}\label{eq:Vdef}%
\begin{bmatrix}
  A_{*,c_1} &   A_{*,c_2} & \dots &   A_{*,c_{r-1}}  
\end{bmatrix}
\Gamma =
A
\underbrace{\begin{smatrix}
  v_1&v_1 & \cdots & \cdots&v_1\\
  \vdots&\vdots&  \vdots& \vdots&\vdots\\
  v_{c_1-1}&\vdots & \vdots&\vdots&\vdots\\
  0 & v_{c_2-1}&\vdots& \vdots&\vdots\\
  0 & 0 & \ddots & \vdots&\vdots\\
  0 & 0 & 0 & v_{c_r-1}&\vdots\\
  0 & 0 & 0 & 0&v_{n}
\end{smatrix}}_{V}.
\end{equation}

Note that $V=\text{Diag}(v_1,\dots,v_n)W$ where $W = [\mathds{1}_{i<c_{j+1}}]_{i,j}$ (with $c_{r+1}=n+1$ by convention)
In order to avoid having to transmit the whole $r\times r$ upper triangular matrix
$\Gamma$, the Verifier only checks a random projection $x$ of it, using the
triangular equivalence Certificate~\ref{cert:triangular_equivalence}.
We then propose the certificate in Protocol~\ref{cert:crp}.
\begin{protocol}[htbp]
  \centering
  {\def\arraystretch{1.2}
    \begin{tabular}{|l c l|}
        \hline
        Prover & & Verifier \\
        & $A \in \mathbb{F}^{m\times n}$ & \\
        \hdashline\rule{0pt}{12pt}
        $\mathcal{J}=(c_1,.., c_r)$ CRP of $A$ &   $\begin{array}{|c|}
        \hline
        \text{\scriptsize Protocol~\ref{cert:lower_rank} on $A$} \\
             \hdashline%
             \xrightarrow{\mathcal{J} = CRP(A) }\\
             \rank{A} \checks{\geq} r \\
             \hline
 \end{array}$

&\\
        \hdashline%
        & & Choose $S \subset \mathbb{F}$ \\
        & $\xleftarrow{~~ v ~~ }$ & $v \sample{S^{n}}$\\
        $V=\text{Diag}(v_1,\ldots,v_n %
                      )W$
        (see (\ref{eq:Vdef})) %
        &&    $W = [\mathds{1}_{i<c_{j+1}} ]\in\F^{n{\times}r}
        $\\

        $\Gamma$ upper tri.  s.t. $A_{*,\{c_1,..,c_r\}}\Gamma = AV$ & &\\
        \hdashline%
        $y = \Gamma x$ 
        & $ \xleftrightarrow{x \text{  (Cert.~\ref{cert:triangular_equivalence})  } y}$& $x \sample{S^{r}}$ \\
        \hdashline%
 && $z=\text{Diag}(v_1,\ldots,v_n %
                             )(Wx)$\\
 && $z_{c_j}=z_{c_j}{-}y_j$ for $j=1..r$\\
 && $Az\checks{=}0$\\
        \hline
    \end{tabular}
    \caption{Certificate for the column rank profile}
    \label{cert:crp}
    }
\vspace{\spaceafterprotocol}\end{protocol}

\begin{theorem}
    For $A \in \F^{m \times n}$ and $S \subset \mathbb{F}$,
    certificate~\ref{cert:crp} is sound, with probability larger than
    $1-\frac{1}{|S|}$, 
    perfectly complete, with a Prover
    computational cost bounded by $O(mnr^{\omega - 2})$,
    a communication space complexity bounded by $m+n+4r$
    and a Verifier cost bounded by $2\mu(A)+n+3r$.
\end{theorem}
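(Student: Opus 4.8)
The plan is to verify the four claimed properties (soundness, completeness, communication, and the two complexity bounds) by combining the two rank sub-certificates with the triangular-equivalence certificate. First I would establish \textbf{perfect completeness}. If the Prover is honest, $\mathcal{J}=(c_1,\dots,c_r)$ is the genuine column rank profile, so the columns $A_{*,c_1},\dots,A_{*,c_r}$ are independent and Protocol~\ref{cert:lower_rank} passes with probability $1$ by Theorem~\ref{th:lower_rank}. Since each prefix sum $A[v_1,\dots,v_{c_k-1},0,\dots]^T$ lies in the span of the first $c_k-1$ columns, and by minimality of the rank profile that span equals the span of $A_{*,c_1},\dots,A_{*,c_{k-1}}$, a solution $\gamma^{(k)}$ exists; assembling these gives the upper-triangular $\Gamma$ with $A_{*,\mathcal{J}}\Gamma=AV$. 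The triangular-equivalence step then passes by the completeness half of Theorem~\ref{th:triangular_equivalence}. Finally one checks that the Verifier's reconstruction $z=\mathrm{Diag}(v)(Wx)$ with $z_{c_j}\mathrel{-}=y_j$ yields $Az=A\,\mathrm{Diag}(v)Wx-A_{*,\mathcal{J}}\Gamma x=AVx-AVx=0$, so the final test succeeds identically.

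For \textbf{soundness} I would argue that a false $\mathcal{J}$ fails one of the two sub-tests with the stated probability. If the returned columns are \emph{not} independent, Theorem~\ref{th:lower_rank} catches this with probability $\ge 1-1/|S|$. Otherwise the columns are independent but $\mathcal{J}$ is \emph{not} lexicographically minimal, meaning some column $c$ with $c<c_k$ is independent of $A_{*,c_1},\dots,A_{*,c_{k-1}}$ for the relevant $k$; equivalently, no upper-triangular $\Gamma$ satisfies $A_{*,\mathcal{J}}\Gamma=AV$, since some prefix of $AV$ escapes the span of the corresponding prefix of selected columns. I would then invoke Theorem~\ref{th:triangular_equivalence}, noting that the matrices $A_{*,\mathcal{J}}$ and $AV$ here play the roles of the regular left-hand matrix and the target: no lower/upper-triangular change of basis exists, so the constrained-interaction Freivalds test rejects with probability $\ge 1-1/|S|$. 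The key point to verify carefully is that the transpose-and-reversal bookkeeping (upper- versus lower-triangular, and the reverse challenge order noted after Theorem~\ref{th:triangular_equivalence}) lines up so that Protocol~\ref{cert:triangular_equivalence} is applied to a \emph{regular} matrix with $m\ge$ its column count; here $A_{*,\mathcal{J}}$ has full column rank $r$, so it is regular as required.

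The \textbf{complexity and communication} bounds are the routine part. The Prover computes the rank profile and a suitable decomposition in $O(mnr^{\omega-2})$, and solving for $\Gamma$ reduces to $r$ triangular solves absorbed in the same bound. Communication decomposes as $m+2r$ from Protocol~\ref{cert:lower_rank} (by the remark following Theorem~\ref{th:lower_rank}), the $n$ coefficients of $v$, and $2r$ more from the exchange of $x$ and $y$ in the triangular-equivalence step, totalling $m+n+4r$. The Verifier cost is the $\mu(A)+r$ from Theorem~\ref{th:lower_rank}, plus the final $Az\checks{=}0$ test costing one further application $\mu(A)$ together with $O(n+r)$ field operations to form $z$ and subtract the $y_j$, giving $2\mu(A)+n+3r$.

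\textbf{The main obstacle} I anticipate is the soundness argument in the non-minimal case: one must show precisely that lexicographic non-minimality of an independent set $\mathcal{J}$ is equivalent to the nonexistence of an upper-triangular $\Gamma$ solving $A_{*,\mathcal{J}}\Gamma=AV$, and that this equivalence is robust to the random weighting by $V=\mathrm{Diag}(v)W$. In particular the $v_i$ must not accidentally collapse a prefix of $AV$ into the wrong span; I would control this by observing that the offending column contributes a nonzero coefficient to the relevant prefix with probability $\ge 1-1/|S|$ over the choice of $v$, and that this failure mode is exactly the inconsistency that Theorem~\ref{th:triangular_equivalence} detects. Reconciling the probabilistic guarantees of the two invoked theorems into the single clean bound $1-1/|S|$ (rather than a product) is the delicate bookkeeping step.
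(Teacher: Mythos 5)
Your proposal follows essentially the same route as the paper: perfect completeness from the three sub-certificates, soundness by the same case split (dependent columns caught by Protocol~\ref{cert:lower_rank}; wrong rank or non-minimal $\mathcal{J}$ forcing either inconsistency or a non-triangular $\Gamma$, caught by Theorem~\ref{th:triangular_equivalence}), and identical communication and cost accounting. The ``obstacle'' you flag at the end is real and is in fact glossed over in the paper's own proof --- the existence of a triangular $\Gamma$ does depend on the random $v$, and your fix (the offending column contributes a nonzero coefficient to the relevant prefix with probability at least $1-1/|S|$) is the correct way to close that gap, at the cost of the bound degrading to a product of two such factors rather than the single $1-1/|S|$ stated.
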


\begin{proof}
    If the Prover is honest, the protocol corresponds first to an application of
    Theorem~\ref{th:lower_rank} to certify that $\mathcal{J}$ is a set of
    independent columns. This certificate is perfectly complete.
    Second the protocol also uses challenges from
    Certificate~\ref{cert:upper_rank}, 
    which is perfectly complete, together with
    Certificate~\ref{cert:triangular_equivalence}, which is perfectly complete
    as well. The latter certificate is used on $A_{*, \mathcal{J}}$, a
    regular submatrix, as $\mathcal{J}$ is a set of independent columns of $A$.
    The final check then corresponds to
    $A(D(Wx)) -A_{*,\{c_1,..c_r\}}y \stackrel{?}{=} 0$ and, overall,
    Certificate~\ref{cert:crp} is perfectly complete. 

    If the Prover is dishonest, then either the set of  columns in $\mathcal{J}$ are  not
    linearly 
    independent, which will be caught by the Verifier with probability at least 
    $1 - \frac{1}{|S|}$, from Theorem~\ref{th:lower_rank},
    or
    $\mathcal{J}$ is not lexicographically minimal, or the rank of $A$ is not $r$.
    If the rank is wrong, it will not be possible for the Prover %-> prover
to find a
    suitable $\Gamma$. This will be caught by the Verifier %->
verifier with probability 
    $1-\frac{1}{|S|}$, from %
    Theorem~\ref{th:triangular_equivalence}.
    Finally, if $\mathcal{J}$ %
    is not lexicographically minimal, there exists at least one column 
    $c_k \notin    \mathcal{J}, c_i < c_k < c_{i+1}$ for some fixed 
    $i$ such that $\{c_1, \dots, c_i\} \cup \{c_k\}$ form a set of linearly
    independant columns of~$A$.
    This means that $\rank(A_{*, 1, \dots, c_{i+1} - 1}) = i+1$, whereas it was
    expected to be $i$. 
    Thus, the Prover %-> prover
cannot reconstruct a suitable triangular $\Gamma$ and this
    will be detected by the Verifier %-> verifier
also with probability $1-\frac{1}{|S|}$, as
    shown in Theorem~\ref{th:triangular_equivalence}.

    The Prover's time complexity is that of computing a $PLUQ$ decomposition of
    $A$.    
    The transmission of $v, x$ and $y$ yields a 
    communication cost of $n+2r$, which adds up to the $m+2r$ communication cost
    of Protocol~\ref{cert:lower_rank}. Finally, in addition to
    Protocol~\ref{cert:lower_rank}, the Verifier computes $Wx$ as a prefix sum with $r-1$
    additions, multiplies it by $D$, then subtracts %-> substracts
$y_i$ at the $r$
    correct positions and finally multiplies by $A$ for a total cost bounded by
    $2\mu(A)+n+3r-1$.
\end{proof}

\subsection{Rank profile matrix certificate}
\label{sec:interractive:RPM}

We propose an interactive certificate for the rank profile matrix based
on~\cite[Algorithm~4]{jgd:2017:bruhat}: first computing the row and column support of the
rank profile matrix, using Certificate~\ref{cert:crp} twice for the row and
column rank profiles, then computing the rank profile matrix of the invertible
submatrix of $A$ lying on this grid.

In the following we then only focus on a certificate for the rank profile
matrix of an invertible matrix. It relies on an LUP decomposition that reveals
the rank profile matrix. From Theorem~\ref{lem:echelonized}, this is the case if
and only if $P^TUP$ is upper triangular. 
Protocol~\ref{cert:RPM:linearcomm} thus gives an interactive certificate that combines
Certificate~\ref{cert:pldu} for a LDUP decomposition with a certificate that $P^TUP$ is upper
triangular. The latter is achieved by
Certificate~\ref{cert:triangular_equivalence} showing that $P^T$ and $P^TU$ are
left upper triangular equivalent, but since $U$ is unknown to the Verifier, the
verification is done on a random right projection with the vector $\phi$  used in Certificate~\ref{cert:pldu}.
  \begin{protocol}[htbp]\center
    {%
      \centering
  {\def\arraystretch{1.2}
    \begin{tabular}{|l c l|}
        \hline
        Prover & & Verifier \\
        \multicolumn{3}{|c|}{$A \in \mathbb{F}^{n\times n}$ invertible} \\
        \hdashline%
          $A=LDUP$, with $P=\RPM{A}$
        &
        $\xrightarrow{\ P,D \ }$ &$P \checks{\in}\mathcal{S}_n$, $D\checks{\in}\mathcal{D}_n(\F)$\\
        \hline
        \multicolumn{3}{|c|}{Protocol~\ref{cert:triangular_equivalence} :
          {\small $P^T$ and $P^TU$ are left up. tri. equiv. with random proj.}}\\
        \hdashline%
        $\overline{U}= P^TUP$ & $\xrightarrow{\overline{U} \text{ is upper tri.}}$&\\
        \hdashline%
        &&Choose $S\subset \F$\\ %
        &$\xleftarrow{e_1,\ldots,e_n %
                     }$&\textbf{for } $i=1,\ldots, %
                                             n$, $e_i\sample{S}$\\
        $f^T= e^T\overline{U}$ &  $\xrightarrow{f_1,\ldots,f_n %
                                                        }$& \\
        &  %
        $\begin{array}{|c|}
          \hline
          \text{\scriptsize Protocol~\ref{cert:pldu} on $A$ %
               }\\
               \hdashline%
               \xleftarrow{ \begin{smatrix} \widetilde{\phi}&\widetilde{\psi} \end{smatrix}
               } \\
               \xrightarrow{\begin{smatrix}\widetilde{x}&\widetilde{y}\end{smatrix}} \\
               \hline
        \end{array}$
        & $\phi,\psi\sample{S^n}$   \\
        && Now $
        \begin{bmatrix} x&y \end{bmatrix}$ is $U \begin{bmatrix} \phi&\psi \end{bmatrix}
        $\\
        \hdashline
        && $ e^TP^T x
        \checks{=}  f^T P^T   \phi
        $  \\
        \hline
    \end{tabular}
    \caption{Rank profile matrix of an invertible matrix}
    \label{cert:RPM:linearcomm}
  }
  }
\vspace{\spaceafterprotocol}\end{protocol}
  \begin{theorem}
    Protocol~\ref{cert:RPM:linearcomm} is sound, with probability 
    $\ge (1-\frac{1}{|S|})^{2n}$, and perfectly complete. The Prover cost is $\GO{n^\omega}$
    field operations, the communication space is bounded by $10n$ and the
    Verifier cost is bounded by $\mu(A)+16n+o(n)$.
  \end{theorem}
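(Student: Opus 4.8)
The plan is to reduce the statement to the two building blocks the protocol is assembled from: Protocol~\ref{cert:pldu}, run as a sub-protocol on $A$, which certifies a decomposition $A=LDUP$ together with the committed $D$, and the extra $e/f$ interaction followed by the final projected test, which certifies that $\overline{U}=P^TUP$ is upper triangular. The pivot of the argument is the characterization recalled before the protocol: for the committed $P,D$, writing $A=LDUP$ with $L$ unit lower triangular and $U$ unit upper triangular, Lemma~\ref{lem:echelonized} gives that $P=\RPM{A}$ if and only if $\overline{U}=P^TUP$ is upper triangular, since $P^TDUP=(P^TDP)(P^TUP)$ with $P^TDP$ an invertible diagonal factor that does not affect triangularity. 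Accordingly I would split the verification into (i) existence and correctness of the factorization $A=LDUP$ and of $D$, and (ii) upper-triangularity of $\overline{U}$.

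For perfect completeness I would assume the Prover honest, so $A=LDUP$ with $P=\RPM{A}$ and $\overline{U}$ upper triangular. Running Protocol~\ref{cert:pldu} on $A$ is perfectly complete by Theorem~\ref{thm:pldu} and yields $x=U\phi$, $y=U\psi$. The final test then reduces to a one-line identity: from $f^T=e^T\overline{U}$ and $\overline{U}P^T=P^TU$ one gets $f^TP^T\phi=e^TP^TU\phi$, which equals $e^TP^Tx$; hence every honest run is accepted.

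For soundness I would split on whether the pldu sub-protocol is answered honestly. If $AP^T$ is not in generic rank profile, or the committed $D$ is wrong, or the returned vectors differ from $U\phi,U\psi$, then Proposition~\ref{cor:uniqueXY} already rejects with probability $\ge(1-\tfrac1{|S|})^{2n}$. Otherwise $x=U\phi$ holds for the true unit upper triangular factor $U$, and the only remaining dishonesty is $P\neq\RPM{A}$, i.e.\ $\overline{U}$ not upper triangular. Here I would reuse the argument of Theorem~\ref{th:triangular_equivalence}: since $\overline{U}$ has a nonzero entry strictly below the diagonal, the matching response coordinate $f_j$ must depend on a challenge coordinate $e_i$ revealed only after $f_j$ is committed, so the committed $f$ equals the true $e^T\overline{U}$ with probability at most $1/|S|$; and whenever $f\neq e^T\overline{U}$, the vector $g^T=(e^T\overline{U}-f^T)P^T$ is nonzero and fixed before $\phi$ is drawn, so the projected check $g^T\phi=0$ passes with probability at most $1/|S|$. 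Compounding these two independent failure events bounds the acceptance probability, which remains at least the claimed $(1-\tfrac1{|S|})^{2n}$; taking the worst case over the branches gives the soundness bound.

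For the complexity I would account the three ingredients separately: the Prover performs one $LU$/PLUQ elimination, a constant number of triangular solves, and the vector-by-matrix product $e^T\overline{U}$ (a symmetric permutation of $U$ plus an $O(n^2)$ product), all within $\GO{n^\omega}$; the communication is that of Protocol~\ref{cert:pldu} (at most $8n$, including $P$ and $D$) augmented by $e$ and $f$, for a total of at most $10n$; and the Verifier pays the $\mu(A)+12n+o(n)$ of Protocol~\ref{cert:pldu} plus a final test costing two symmetric permutations and two dot products, i.e.\ $O(n)$ extra, giving $\mu(A)+16n+o(n)$. I expect the main obstacle to lie precisely in the last soundness branch: one must verify that carrying the triangular-equivalence argument through the \emph{hidden} factor $U$, observed only via the random projection $\phi$ borrowed from the reused Protocol~\ref{cert:pldu}, still isolates a single below-diagonal entry and that the two compounded failure events keep the global detection probability at $(1-\tfrac1{|S|})^{2n}$ rather than degrading it.
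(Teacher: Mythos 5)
Your proposal is correct and follows essentially the same route as the paper's proof: soundness is split between Proposition~\ref{cor:uniqueXY} for the case $\begin{bmatrix}x&y\end{bmatrix}\neq U\begin{bmatrix}\phi&\psi\end{bmatrix}$, and, when the vectors are correct, the triangular-equivalence timing argument showing $f\neq e^T\overline{U}$ with probability $\ge 1-\frac{1}{|S|}$ followed by the DeMillo--Lipton/Schwartz--Zippel bound on the random projection by $\phi$, with the same $10n$ communication and $\mu(A)+16n+o(n)$ Verifier accounting. The only blemish is a wording slip (``acceptance probability \ldots at least'' should be the \emph{detection} probability), which does not affect the substance.
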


  \begin{proof}
  If the Prover is dishonest and  $\overline{U}=P^TUP$ is not upper triangular,
then let $(i,j)$ be the lexicographically minimal coordinates such that $i>j$ and
$\overline{U}_{i,j}\neq 0$.
Now either
$\begin{bmatrix} x&y  \end{bmatrix} \neq U \begin{bmatrix} \phi&\psi  \end{bmatrix}$,
and the verification will then fail to detect it with probability less than $(1-\frac{1}{|S|})^{2n}$, from
Proposition~\ref{cor:uniqueXY}.
Or one can write
$e^TP^T  x- f^TP^T \phi = (e^T\overline{U}-f^T) P\phi=0$.
  If
  \begin{equation}\label{eq:qtuq}
    e^TP^TUP-f^T=0.
  \end{equation}
  is not satisfied, then a random $\phi$ will fail to detect it
  with probability less than $\frac{1}{|S|}$, since
$e,\overline{U}$ and $f$ are set before  choosing for $\phi$.
 At the time of committing %-> commiting
$f_j$, the value of $e_i$
  is still unknown, hence $f_j$ is constant in the symbolic variable
  $E_i$. Thus the $j$-th coordinate in~\eqref{eq:qtuq} is a nonzero polynomial
  in $E_j$ and therefore vanishes with probability $1/|S|$ when sampling the
  values of $e$ uniformly. Hence, overall if $P^TUP$ is not upper triangular, the
  verification will detect it with probability $\ge (1-\frac{1}{|S|})^{2n}$.

  The Verifier's cost is that of Protocol~\ref{cert:pldu} with two additional
  dot products for the last step, which is $\mu(A)+16n+o(n)$. Similarly, the communication
  cost is that of Protocol~\ref{cert:pldu} plus the size of $e$ and $f$ for a
  total of $10n$. The Prover remains unchanged.
 \end{proof}

Finally, we use~\cite[Algorithm~4]{jgd:2017:bruhat} to certify the rank profile matrix
of any matrix, even a singular one. To do so, we need to verify the row rank profile and the 
column rank profile of the input matrix, which can be done with two applications of 
Certificate~\ref{cert:crp}. Then, we certify the rank profile matrix of the $r \times r$
selection %
of lexicographically minimal independent rows and columns we obtained before.
This is done by an application of Certificate~\ref{cert:RPM:linearcomm}.
We now define $\mathcal{E}_{m, \{i_1, \dots, i_n\}}$ as the $m \times n$ matrix
whose $j$-th column is the $i_j$-th vector of the $m$-dimensional canonical
basis. This certificate is detailed in Protocol~\ref{cert:RPM:full}, in the case where $m \leq n$.
If $n < m$, one should first apply Protocol~\ref{cert:crp} on $A$ to compute its column rank profile, and 
then apply the verification steps of the same protocol for the row rank profile of $A$. The application of
Protocol~\ref{cert:RPM:linearcomm} remains unchanged.

\begin{theorem}
  Protocol~\ref{cert:RPM:full} is sound, with probability 
  $\ge (1-\frac{1}{|S|})^{2n}$, and perfectly 
  complete. The Prover cost is $\GO{mnr^{\omega-2}}$ field operations, the
  communication space is bounded by 
  $m+n+\min(m,n)+17r$ and the Verifier cost is bounded by $4\mu(A)+m+n+21r$.
\end{theorem}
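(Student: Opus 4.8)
The plan is to treat Protocol~\ref{cert:RPM:full} as a black-box composition of the three certificates into which \cite[Algorithm~4]{jgd:2017:bruhat} decomposes the computation of the rank profile matrix: the column rank profile (Certificate~\ref{cert:crp}), the row rank profile (Certificate~\ref{cert:crp} applied on the transpose), and the rank profile matrix of the $r\times r$ invertible submatrix lying on the grid of lexicographically minimal independent rows and columns (Protocol~\ref{cert:RPM:linearcomm}). Since that algorithm is correct, once the Verifier has accepted the two rank profiles and the grid rank profile matrix, the quantity it reconstructs from the selection matrices $\mathcal{E}$ is provably $\RPM{A}$; the whole soundness and completeness analysis thus reduces to that of the three constituent protocols.

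Perfect completeness is immediate: each of the three sub-certificates is perfectly complete, so an honest Prover, submitting the true row and column rank profiles together with a PLUQ decomposition of the grid submatrix revealing its rank profile matrix (as characterized by Lemma~\ref{lem:echelonized}), passes every test, and the Verifier's reconstruction returns the correct $\RPM{A}$. For soundness I would argue by contraposition: if the Prover's final answer is an incorrect rank profile matrix, then by correctness of \cite[Algorithm~4]{jgd:2017:bruhat} at least one of the three committed objects --- the column rank profile, the row rank profile, or the grid rank profile matrix --- must be false. Whichever object is false is rejected by its own certificate with the probability guaranteed earlier, while the remaining (true) pieces are accepted with certainty by perfect completeness. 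Hence the detection probability is at least the smallest of the three individual ones, namely $\min\{1-\tfrac{1}{|S|},\,1-\tfrac{1}{|S|},\,(1-\tfrac{1}{|S|})^{2r}\}=(1-\tfrac{1}{|S|})^{2r}$. Since $r\le n$ and $1-\tfrac{1}{|S|}\in(0,1]$, this is at least $(1-\tfrac{1}{|S|})^{2n}$, giving the claimed bound.

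Finally, the complexity bounds follow by adding the costs of the three constituent certificates, specializing the grid certificate to dimension $r$: the dominating Prover cost is the $\GO{mnr^{\omega-2}}$ of the two rank profile certificates, the grid certificate only contributing $\GO{r^\omega}$; the communication is the sum of the two $m+n+4r$-type terms and the $10r$ term of the grid certificate, together with the $\min(m,n)$ indices describing $\mathcal{E}$, bounded by $m+n+\min(m,n)+17r$; and summing the Verifier costs of the three constituent certificates yields $4\mu(A)+m+n+21r$. The delicate point, and the one I expect to require the most care, is the soundness reduction: one must make sure that every way in which the reconstructed matrix can be wrong is faithfully captured by a falsified input to one of the sub-certificates --- precisely the kind of case analysis whose earlier omission the authors flag in the introduction --- and that the composed failure probability is correctly bounded, here by exploiting $r<n$ so that the grid certificate's $(1-\tfrac{1}{|S|})^{2r}$ dominates the two rank-profile terms.
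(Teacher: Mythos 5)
Your proposal follows essentially the same route as the paper: compose the two rank-profile certificates with the grid certificate on the $r\times r$ submatrix, inherit perfect completeness from the constituents, bound soundness by the weakest individual detection probability, and sum the costs. The one detail the paper uses that you gloss over is that the second run of Certificate~\ref{cert:crp} omits the underlying lower-bound Protocol~\ref{cert:lower_rank} (the rank having been certified once already), which is what brings the communication to $m+n+\min(m,n)+17r$ rather than the roughly $2(m+n)+\min(m,n)+18r$ that two full $m+n+4r$ terms would give.
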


\begin{proof}
    If the Prover is honest, $\mathcal{I}$ is the row rank profile of A and $\mathcal{J}$ is the 
    column rank profile of $A$. Then, the application of Protocol~\ref{cert:RPM:linearcomm} will 
    output the correct rank profile matrix of $A_{\mathcal{I}, \mathcal{J}}$ which will lead the Verifier
    to the correct rank profile matrix of $A$, as described in~\cite[Theorem~37]{jgd:2017:bruhat}.
    Note that one only needs to verify the lower bound on the rank of $A$ once, which is why 
    Certificate~\ref{cert:crp} is fully executed once, while the second run only verifies that the committed %-> commited
    rank profile is a rank profile indeed.

    Now, for the soundness, Prover has a probability $\ge 1-1/|S|$ to be caught when
    cheating while running
    Certificate~\ref{cert:crp}, and a probability $\ge (1-\frac{1}{|S|})^{2n}$ to be caught when
    cheating while running 
    Certificate~\ref{cert:RPM:linearcomm}. Overall, this makes a probability
    $\ge (1-\frac{1}{|S|})^{2n}$ for the
    Verifier to catch a cheating Prover during the execution of Certificate~\ref{cert:RPM:full}.

    For the complexity, Prover time complexity is bounded by the complexity of performing a PLUQ 
    decomposition of the input matrix, $\GO{mnr^{\omega - 2}}$. The Verifier complexity is the one
    of one full application of Protocol~\ref{cert:crp} and one application of Protocol~\ref{cert:crp}
    without applying %-> appying
Protocol~\ref{cert:lower_rank}, which makes $3\mu(A) + n + m + 5r$, plus one
    application of Protocol~\ref{cert:RPM:linearcomm} over an $r \times r$ matrix for a cost of 
    $\mu(A) + 16r + o(r)$, the computation of $\mathcal{R}_A$ only consists of memory operations, 
    hence a total cost of $4\mu(A) + m + n + 21r + o(r)$ field operations.
    Communication space is computed as follows: a full application of Protocol~\ref{cert:crp} on
    $A$ if $m \geq n$, on $A^T$ otherwise, an application of the same Protocol without the underlying
    Protocol~\ref{cert:lower_rank} which makes $n + m + min(m, n) + 7r$ and the same application of 
    Protocol~\ref{cert:RPM:linearcomm} as above, for a cost of $10r$, hence a total communication space
    of $m + n + min(m, n) + 17r$.
\end{proof}

\begin{protocol}[htbp]
  {%
    \centering
{\def\arraystretch{1.2}
    \begin{tabular}{|l c l|}
      \hline
      Prover & & Verifier \\
      \multicolumn{3}{|c|}{$A \in \mathbb{F}^{m\times n}$ (assuming $m\leq n$ w.l.o.g.)}\\
      \hdashline%

      &$\begin{array}{|c|}
        \hline
           \text{\scriptsize Protocol~\ref{cert:crp} on $A^T$} \\
             \hdashline%
             \xrightarrow{\mathcal{I} = RRP(A) \in [\![1, m ]\!]^r} \\
             \hline
      \end{array}$&\\
      && Now, $\mathcal{I} = RRP(A)$\\
      && and $\rank(A) \geq r$ \\\hdashline
      &&\\
      $\mathcal{J} = CRP(A) = (c_1, \dots, c_r)$ & $\xrightarrow{\mathcal{~~~~J~~~~}}$ & \\
      & & Choose $S \subset \mathbb{F}$ \\
      & $\xleftarrow{~~~~v~~~~}$ & $v \sample{S^{n}}$\\
      $V=\text{Diag}(v_1,\ldots,v_n %
                    )W$ (see (\ref{eq:Vdef})) %
      &&    $W = [\mathds{1}_{i<c_{j+1}} ]\in\F^{n{\times}r}
      $\\

      $\Gamma$ upper tri.  s.t. $A_{*,\{c_1,..,c_r\}}\Gamma = AV$ & &\\
      \hdashline%
      $y = \Gamma x$ 
      & $ \xleftrightarrow{x \text{  (Cert.~\ref{cert:triangular_equivalence})  } y}$& $x \sample{S^{r}}$ \\
        \hdashline%
      && $z = \text{Diag}(v_1,\ldots,v_n %
                                  )(Wx)$\\
      && $z_{c_j} = z_{c_j}{-}y_j$ for $j=1..r$\\
      && $Az\checks{=}0$\\
 
        \hdashline

      &$\begin{array}{|c|}
        \hline
        \text{\scriptsize Protocol~\ref{cert:RPM:linearcomm} on $A_{\mathcal{I}, \mathcal{J}}$} \\
             \hdashline%
             \xrightarrow{\mathcal{R}_r = RPM(A_{\mathcal{I},\mathcal{J}}) } \\
             \hline
      \end{array}$&\\\hdashline

      & & $\mathcal{R}_A = \mathcal{E}_{m, \mathcal{I}}
       \mathcal{R}_r  \mathcal{E}_{n, \mathcal{J}}^{T} $\\\hline

  \end{tabular}
  \caption{Rank profile matrix}
  \label{cert:RPM:full}
  }
  }
\vspace{\spaceafterprotocol}\end{protocol}

\section{Certificate for the signature of an integer matrix}\label{ssec:sign}
The signature of a symmetric matrix is the triple $(n_+,n_-,n_0)$ indicating the
number of positive, negative, and zero eigenvalues, respectively.
Just like~\cite[Theorem~5]{dk14}, the idea is that the Prover commits the
signature, and then certifies it modulo a Verifier chosen prime.  
This works directly for the signature algorithm in~\cite[Corollary~1]{kns11}
together with the {\sc{CharPoly}} protocol of
\cite[Theorem~14]{jgd:2016:gammadet}. 
As in \S~\ref{sec:det}, in the dense case we propose here to replace
the {\sc{CharPoly}} computation with a symmetric Gaussian elimination. 

Over the rationals, an algorithm for the Prover could be to first compute and
certify the rank of $A$, and to compute a
permutation matrix $P$ such that
$P^TAP$ has generic rank profile: for instance compute a $PL_p{\Delta_p}L_p^TP^T$
factorization modulo a sufficiently large prime $p$.
Then $B=[I_r|0]P^TAP\begin{smatrix} I_r\\0\end{smatrix}$ is symmetric and non-singular.
It is then sufficient to lift or reconstruct only the block diagonal matrix
${\Delta}$ over $\Q$ of a non-pivoting symmetric factorization of $B$ (the unit
triangular matrix over $\Q$ need not be computed). Compared to an integer
characteristic polynomial computation this gains in practice an order of
magnitude in efficiency for the Prover as shown on the logscale
Figure~\ref{fig:signature}, using LinBox-1.5.1~\cite{linbox:2017}.  

\begin{figure}[htbp]
\centering%
\input{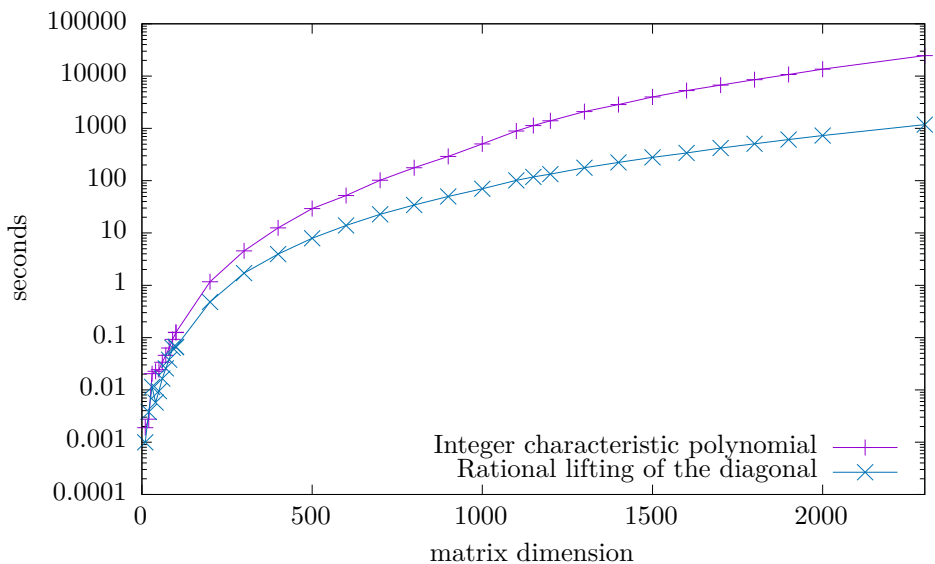}
%\begin{gnuplot}[terminal=cairolatex]
%#  set title '(Verifiable) signature computation'
%  set logscale y
%  set ylabel 'seconds'
%  set xlabel 'matrix dimension'
%  set key bottom right
%  plot [0:2300] "signature_bench.txt" using 1:11 title "Integer characteristic polynomial" with linespoint, "signature_bench.txt" using 1:7 title "Rational lifting of the diagonal" with linespoint lc 6 pt 2
%\end{gnuplot}
\caption{
(Verifiable) signature computation on a single Intel Skylake core @3.4GHz.}\label{fig:signature}%
\end{figure}%

For the verification, the block diagonal matrix ${\Delta}$, and the permutation
$P$ are committed. The Verifier then randomly chooses a prime $q$ and enters an
interactive certification process for $P$ and $\Delta\mod{q}$ using
Protocol~\ref{cert:pldu}, as shown on Protocol~\ref{cert:sign}.  

\begin{protocol}[htbp]
  \centering
  {\small
    \begin{tabular}{|l c l|}
      \hline\rule{0pt}{12pt}   
      Prover & & \hspace{15pt}Verifier \\
      \multicolumn{3}{|c|}{\hspace{-25pt}$A \in \mathbb{\Z}^{n{\times}n}$ symmetric} \\
      \hdashline\rule{0pt}{12pt}    
       & $\fxrightarrow{\mathcal{I}}$ & \\
       & $\fxleftarrow{q_1}$ & $q_1\sample{S_1\subset\Primes}$\\
       & $\ldots$ & $\mathcal{I} \checks{=} RRP(A)=CRP(A) \mod q_1$, \\
       &  & and $|\mathcal{I}|\checks{=} rank(A) \mod q_1$ by Cert.~\ref{cert:crp}.\\ 
      \hdashline\rule{0pt}{12pt}   
       & $\fxrightarrow{P,\Delta}$ & $P\checks{\in}\mathcal{S}_r$, $\Delta\checks{\in}\mathcal{D}^{(2)}_r(\Q)$\\
      & $\fxleftarrow{q_2}$ & $q_2\sample{S_2\subset\Primes}$\\
      & $\begin{array}{|c|}
        \hline
        \text{\scriptsize Protocol~\ref{cert:pldu} on}~P^TA_{\mathcal{I}}P\mod{q_2}\\
        \hdashline%
        \xleftarrow{ \begin{smatrix} \widetilde{\phi_i}&\widetilde{\psi_i} \end{smatrix}
        } \\
        \xrightarrow{\begin{smatrix}\widetilde{x_i}&\widetilde{y_i}\end{smatrix}} \\
         \xleftarrow{ \begin{smatrix} \widetilde{\lambda_i} \end{smatrix}
        } \\
        \xrightarrow{\begin{smatrix}\widetilde{z_i}\end{smatrix}} \\
       \hline
      \end{array}$
      & \\
     && $\vdots$\\
    && \hspace{-15pt}$z^T{\Delta} \begin{bmatrix}x&y\end{bmatrix} \checks{=}
    (\lambda^T P^TA_{\mathcal{I}}) P \begin{bmatrix}\phi&\psi\end{bmatrix} \mod{q_2}$ \\
      \hdashline\rule{0pt}{12pt}   
      & & Extract ($n_+({\Delta}),n_-({\Delta}),n-r$)\\
      \hline
    \end{tabular}
    \caption{Certificate for the signature of a symmetric matrix}
    \label{cert:sign}
  }
\end{protocol}

From~\cite[Theorem~5]{dk14}, we let $h = \log_2(\sqrt{n}^n ||A||_\infty^n)$ be the
logarithm of Hadamard's bound for the invariant factors of $A$. 
There cannot be more than $h$ primes reducing the rank. Therefore it is possible
to sample $c \cdot h$ distinct primes of magnitude bounded by $O(h \log(h))$ for
any constant $c>2$ and select $q_1$ from that set $S_1$. 
Once the rank is certified, the Prover can compute the permutation and lift the
diagonal. Finally the rational $PL{\Delta}L^TP^T$ factorization of the full rank
matrix can be similarly verified modulo a prime $q_2$. As for the determinant,
no more than $h$ primes can reduce the rank of ${\Delta}$ and $q_2$ can be
selected from the same kind of set. We have proven:
\begin{corollary} 
  For a symmetric matrix $A \in \mathbb{\Z}^{n{\times}n}$,
  certificate~\ref{cert:sign} for its signature is sound and perfectly complete.
\end{corollary}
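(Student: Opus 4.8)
The plan is to reduce the statement to three facts that are each certified by an already-established sub-protocol, and to glue them together by Sylvester's law of inertia. First I would record the structural identity underlying the algorithm: since $P^TAP$ has generic rank profile, its leading $r\times r$ principal submatrix $B=P^TA_{\mathcal{I},\mathcal{I}}P$ is symmetric and non-singular, and a single Schur-complement congruence turns $P^TAP$ into $\mathrm{diag}(B,0)$ (the complement vanishes because the rank is exactly $r$). Permutation congruence $A\mapsto P^TAP$ and the congruence $B=L\Delta L^T$ preserve inertia over $\Q$ (equivalently over $\R$), so the signature of $A$ is exactly $(n_+(\Delta),n_-(\Delta),n-r)$, where each $1\times1$ block of $\Delta$ contributes its sign and each $2\times2$ anti-diagonal block contributes one positive and one negative eigenvalue. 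This is the only genuinely signature-theoretic ingredient and it justifies the final extraction step of Protocol~\ref{cert:sign}.

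Perfect completeness then follows from that of the constituents. For an honest Prover the committed $\mathcal{I}$ is the common row and column rank profile of $A$ over $\Q$, and $B=L\Delta L^T$ with $L$ unit lower triangular. Conditioned on the sampled primes $q_1\in S_1$ and $q_2\in S_2$ being good (not reducing the relevant ranks), the reductions modulo $q_1$ and $q_2$ agree with the rational data, so Certificate~\ref{cert:crp} (run modulo $q_1$) and Protocol~\ref{cert:pldu} (run on $B\bmod q_2$) accept identically, the concluding identity $z^T\Delta[x\;y]=(\lambda^TP^TA_{\mathcal{I}})P[\phi\;\psi]$ holds exactly, and there is no residual dependence on the field challenges. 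Good primes exist in $S_1$ and $S_2$ because at most $h$ primes can reduce the rank of $A$ (resp.\ of $\Delta$) while the sets are chosen of size $c\,h$ with $c>2$.

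For soundness I would split a dishonest Prover's options. If the committed $\mathcal{I}$ is not the rank profile of $A$ over $\Q$, or $|\mathcal{I}|\ne\mathrm{rank}(A)$, then modulo a good $q_1$ the same discrepancy persists and Certificate~\ref{cert:crp} rejects with probability $\ge 1-1/|S|$ by Theorems~\ref{th:lower_rank} and~\ref{th:triangular_equivalence}; the chance of an unlucky bad $q_1$ is bounded by $h/|S_1|$ through the Hadamard bound on the invariant factors. If instead $\mathcal{I}$ is correct but the committed rational block-diagonal $\Delta$ is not the true factor $\Delta^*$ of $B$, then Proposition~\ref{cor:uniqueXY} applied to Protocol~\ref{cert:pldu} forces $\Delta\equiv\Delta^*\pmod{q_2}$ with probability $\ge(1-1/|S|)^{2n}$; combined with a size bound on $\Delta$ this pins down $\Delta=\Delta^*$ over $\Q$, so the signs read off the committed matrix are those of $\Delta^*$ and the extracted signature is correct.

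The step I expect to be the main obstacle is exactly this last pinning-down: the Verifier reads the signs from the \emph{rational} $\Delta$, whereas every algebraic check is performed \emph{modulo} $q_2$, so I must rule out a cheating $\Delta\ne\Delta^*$ over $\Q$ that nevertheless reduces to $\Delta^*$ modulo $q_2$. The argument is a prime-counting one: $\Delta^*$ is the unique block-diagonal factor of $B$, its pivots are ratios of consecutive leading minors whose numerators and denominators have bit-size $O(h)$, so any committed $\Delta\ne\Delta^*$ (required to obey the same size bound, which the Verifier must enforce) yields a nonzero difference vanishing modulo at most $O(h/\log q_2)$ primes; a random $q_2$ from the size-$c\,h$ set $S_2$ therefore exposes the discrepancy with controlled probability. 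The remaining routine work is to aggregate the failure probabilities of the rank phase, the two prime choices, and the $2n$ interactive rounds into the final soundness bound, while noting that perfect completeness is clean once good primes are fixed.
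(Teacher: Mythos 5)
Your proposal is correct and follows essentially the same route as the paper: certify the rank profile modulo a random prime $q_1$ drawn from a Hadamard-bounded set of $c\cdot h$ primes, certify the committed $P$ and $\Delta$ via Protocol~\ref{cert:pldu} modulo a second random prime $q_2$, and read the signature off $\Delta$ by Sylvester's law of inertia. The paper's own argument is only the short paragraph preceding the corollary; your explicit treatment of the congruence of $A$ to $\mathrm{diag}(B,0)$ and of the size bound needed to pin the rational $\Delta$ down from its residue modulo $q_2$ merely spells out steps the paper leaves implicit.
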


The communication comprise that of the Certificate~\ref{cert:pldu}, the
permutation matrix $P$, all of size $n$, as well as small primes bounded by $h$,
and finally ${\Delta}$. 
Just like that of the characteristic polynomial, the size of ${\Delta}$ can be
quadratic and therefore the whole protocol is not linear. 
Thus a simpler quadratic certificate communicating the triangular matrix $L$
modulo $q_2$, and checking the decomposition $A-L{\Delta}L^T$ via Freivalds' %-> apostroph
certificate might also work. But then the communication and Verifier time would
always be quadratic.
Instead, Protocol~\ref{cert:sign}, just like the Protocol using the
characteristic polynomial, is better if the size of the
determinant is small, as then the size of ${\Delta}$ might be much less than
that of $L$ (for instance linear if the determinant is a constant). 
Protocol~\ref{cert:sign} is also interesting if $\mu(A)$ is less than
quadratic.

\section{Constant round certificates}\label{sec:mulmuley}
When delegating computations, the network latency can 
make communication rounds expensive. It can therefore also be interesting not
only to reduce the communication volume, but also the number of rounds. We
therefore propose in this section a certificate with a constant number of rounds
for triangular equivalence, still preserving Prover efficiency as well as linear
communication volume and Verifier cost. 
This applies then directly, as previously shown, to row or column rank
profiles. However it fails to apply to the generic rank profile, at least in a
straightforward manner, and we were unable to produce such  a certificate in
constant round for this task.

\subsection{Representative Laurent polynomial of a matrix}
Following a technique in~\cite{Mul86}, we first define the {\em representative
  Laurent polynomial}, $P_A(X)$ of an 
$m{\times}n$ matrix $A$ as~:
$$
P_A(X)=
\begin{bmatrix}1&X&X^2&\ldots&X^{m-1}\end{bmatrix}
\cdot{}A
\cdot\begin{bmatrix}1\\X^{-1}\\\vdots\\X^{1-n}
\end{bmatrix}
=\sum_{i=1}^{m}\sum_{j=1}^{n} A_{i,j}  X^{i-j}
$$

Therefore, if a matrix is lower triangular, then its representative
Laurent polynomial cannot have negative powers and it is therefore a
polynomial of degree at most $m-1$. 
The converse is not true, consider
for instance an upper diagonal with two opposite coefficients~:
$A_{i,i+1}=0$ for all $i$ except $A_{1,2}=-A_{2,3}$. Generically, if
one pre-multiplies $A$ on the right by a random non-zero diagonal
matrix, these cancellations will not occur as in general
$d_1A_{1,2}{\neq}-d_2A_{2,3}$ unless $A_{1,2}=A_{2,3}=0$.

\subsection{Constant round triangular equivalence certificate}
From this representation we can obtain a triangular equivalence
certificate that requires only a constant number of rounds: the Prover
commits that polynomial, then the Verifier will evaluate the
polynomial at a random point and compare this to the actual
projections. The counterpart is that the field size must be
sufficiently large so that the polynomial identity testing does not
fail. The full certificate is given in Protocol~\ref{cert:treqpol}. 
It requires that the Prover solves a regular system (this is checked
deterministically by reapplying the resulting vector), and a
preconditioning by a diagonal matrix to prevent cancellations.

\begin{protocol}[htbp]
  \centering
  \begin{tabular}{|c c l|}
    \hline\rule{0pt}{12pt}   
    Prover &  & Verifier \\
    & $ A, B \in \mathbb{F}^{m{\times}n} $ &  \\
    & $A$ is regular, $m{\geq}n$ & \\
    \hdashline\rule{0pt}{12pt}
    $\exists{}L$ lower triang. s.t. $AL= B$ & $\fxrightarrow{\exists{}L}$ & \\
    \hdashline\rule{0pt}{12pt}   
        & & Choose $S \subset \mathbb{F}$ \\
    & $\fxleftarrow{D}$& $D\sample{{\mathcal D}_n(S\backslash\{0\})}$ \\
    $g(X)=P_{LD}(X)$      & $\fxrightarrow{g(X)}$& $g\stackrel{?}{\in}
                            \F[X]_{\text{\upshape deg}\le n-1} %
                                                    $
\\
    & $\fxleftarrow{\lambda}$& $\lambda\sample{S}$ \\
    $y$, s.t. $A\cdot{}y=B\cdot{}D\cdot{}\begin{bmatrix}1\\\lambda^{-1}\\\vdots\\\lambda^{1-n}\end{bmatrix}$
    & $\fxrightarrow{y}$
    & $A\cdot{}y\checks{=}B\cdot{}D\cdot{}\begin{bmatrix}1\\\lambda^{-1}\\\vdots\\\lambda^{1-n}\end{bmatrix}$ \\
    && $g(\lambda)\checks{=}\begin{bmatrix}1& \lambda&\ldots &\lambda^{n-1}\end{bmatrix}\cdot{}y$ \\
    \hline
  \end{tabular}
  \caption{Constant round linear communication certificate for triangular equivalence}
  \label{cert:treqpol}
\end{protocol}

\begin{theorem}
  Let $A, B \in \mathbb{F}^{m \times n}$, $m{\geq}n$, and assume $A$ is regular.
  Certificate~\ref{cert:treqpol} is sound, with probability %-> sp
larger
  than $1-2\frac{n-1}{|S|}$ and perfectly complete.
  The Prover cost is dominated by one system solving, $O(mn^{\omega-1})$,
  the communication space is bounded by $3n+1$ 
  and the Verifier cost is bounded by $\mu(A)+\mu(B)+7n$
\end{theorem}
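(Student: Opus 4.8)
The plan is to reuse the two-case template of the proof of Theorem~\ref{th:triangular_equivalence}, replacing the entrywise interaction by a single polynomial-identity test on the representative Laurent polynomial, the random diagonal $D$ serving precisely to kill the spurious cancellations of negative powers described after the definition of $P_A$. First I would check perfect completeness: if $AL=B$ with $L$ lower triangular, then $LD$ is lower triangular, so $g(X)=P_{LD}(X)=\sum_{i\ge j}(LD)_{i,j}X^{i-j}$ has only nonnegative exponents and degree at most $n-1$, and the degree test succeeds. Writing $c(\lambda)=\begin{bmatrix}1&\lambda^{-1}&\cdots&\lambda^{1-n}\end{bmatrix}^T$, regularity of $A$ makes the Prover's $y$ the unique solution of $Ay=BDc(\lambda)=ALDc(\lambda)$, namely $y=LDc(\lambda)$; hence $\begin{bmatrix}1&\lambda&\cdots&\lambda^{n-1}\end{bmatrix}y=P_{LD}(\lambda)=g(\lambda)$ and both final checks pass.

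For soundness, suppose $A$ and $B$ are not lower triangular right equivalent and, as in Theorem~\ref{th:triangular_equivalence}, replace $\lambda$ by an indeterminate. Since $A$ is regular there is at most one $T$ with $AT=B$, namely $T=\leftinv{A}B$, which splits the analysis into two cases. If no such $T$ exists, some column $j$ of $B$ lies outside the column space of $A$, and a Farkas certificate $z$ ($z^TA=0$, $z^TB_{*,j}\ne0$) gives $z^T\bigl(Ay-BDc(\lambda)\bigr)=-\sum_k (z^TB_{*,k})\,d_k\,\lambda^{1-k}$; because $d_j\ne0$ this is a nonzero Laurent polynomial whose product with $\lambda^{n-1}$ has degree at most $n-1$, so by the DeMillo--Lipton/Schwartz/Zippel lemma~\cite{DeMillo:1978:ipl,Zippel:1979:ZSlemma,Schwartz:1979:SZlemma} the check $Ay=BDc(\lambda)$ fails for all but at most $n-1$ choices of $\lambda$. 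If instead the unique $T$ exists but is not lower triangular, regularity again forces $y=TDc(\lambda)$, so the final verification reduces to $g(\lambda)=P_{TD}(\lambda)$. Here the preconditioner is essential: with $k^\star=\max\{j-i:T_{i,j}\ne0\}\ge1$, the coefficient of $X^{-k^\star}$ in $P_{TD}(X)$ is $\sum_i T_{i,i+k^\star}\,d_{i+k^\star}$, a nonzero linear form in the entries of $D$, hence nonzero for all but a $1/(|S|-1)$ fraction of diagonals; on that event $P_{TD}$ genuinely carries a negative power and therefore differs, as a Laurent polynomial, from the committed polynomial $g$. Multiplying the discrepancy by $X^{n-1}$ gives a nonzero polynomial $X^{n-1}(g(X)-P_{TD}(X))$ of degree at most $2(n-1)$ whose low-order coefficients are fixed by $D$ independently of the Prover's choice of $g$, so $g(\lambda)=P_{TD}(\lambda)$ can hold for at most $2(n-1)$ values of $\lambda$. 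Collecting the two cases yields soundness at least $1-2\frac{n-1}{|S|}$, the lower-order cancellation failure in $D$ being absorbed. I expect this last case to be the main obstacle: one must argue in the correct commitment order ($g$ is sent after $D$ but before $\lambda$) and verify that $D$ indeed forbids the cancellation, then combine the degree-one failure in $D$ with the degree-$2(n-1)$ failure in $\lambda$ to land on the claimed probability.

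Finally, for the complexity, the Prover obtains $T$ (equivalently the triangular factor) by one PLUQ elimination on the regular matrix $A$ followed by a triangular solve, both in $O(mn^{\omega-1})$, and forms $g$ and $y=TDc(\lambda)$ in $O(n^2)$. The communication consists of $D$, the $n$ coefficients of $g$, the scalar $\lambda$ and the vector $y$, totalling $3n+1$. The Verifier forms $c(\lambda)$, the Vandermonde row $\begin{bmatrix}1&\lambda&\cdots&\lambda^{n-1}\end{bmatrix}$ and $Dc(\lambda)$ in $O(n)$ operations, performs the two matrix--vector products $Ay$ and $B(Dc(\lambda))$ in $\mu(A)+\mu(B)$, and evaluates $g(\lambda)$ by Horner together with the final dot product, for a total of $\mu(A)+\mu(B)+7n$; note that the challenge $\lambda$ must be nonzero for $c(\lambda)$ to be defined, which is folded into the root counts above.
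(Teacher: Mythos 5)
Your proposal is correct and follows essentially the same route as the paper's proof: completeness via uniqueness of $y$ under regularity of $A$, soundness split into the Farkas-certificate case and the case where the unique $T$ is not triangular (where a negative-degree coefficient of $P_{TD}$ is a nonzero linear form in the entries of $D$, hence survives a random choice of $D$, after which $X^{n-1}(g-P_{TD})$ is a nonzero polynomial of degree at most $2(n-1)$), and the same accounting for the complexity bounds. Your choice of the extremal super-diagonal index $k^\star$ makes the non-cancellation argument slightly more explicit than the paper's, and your remark that $\lambda$ must be nonzero is a detail the paper leaves implicit, but these are refinements rather than a different approach.
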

\begin{proof}
Let $x=D\begin{bmatrix}1\\\lambda^{-1}\\\vdots\\\lambda^{1-n}\end{bmatrix}$.
As $A$ is regular, there is only one solution $y$ to $Ay=Bx$, 
and $y=Lx$. Therefore 
$\begin{bmatrix}1& \lambda&\ldots &\lambda^{n-1}\end{bmatrix}\cdot{}y = 
\begin{bmatrix}1& \lambda&\ldots &\lambda^{n-1}\end{bmatrix}\cdot{}LD\begin{bmatrix}1\\\lambda^{-1}\\\vdots\\\lambda^{1-n}\end{bmatrix}=P_{LD}(\lambda)$ and the
protocol is correct.
For the soundness: 
\begin{itemize}
\item
  As $A$ is regular, there is only one solution $y$ to $Ay=Bx$, thus that check 
  ensures that $y$ is correct,
unless not all columns in $B$ are in the column space of $A$, which %
is handled as in the proof of Theorem~\ref{th:triangular_equivalence}. %
\item If $L$ is not triangular then its upper part is not identically
  zero. Therefore by considering $D$ as a diagonal matrix of
  indeterminates, at least one coefficient of negative degree of the
  representative rational fraction $LD$ will be non identically zero.
  As those are of degree $1$ in the indeterminates of $D$, for a
  random diagonal $D$, the representative rational fraction of $LD$
  will not be a polynomial with probability at least $1-\frac{1}{|S|-1}$.
\item If $g$ is not a polynomial of degree at most $n-1$, it is not the
  representative of a triangular matrix.
\item If $g$ is not the representative polynomial of
  $LD$ then by the DeMillo-Lipton/\allowbreak Schwartz/\allowbreak Zippel
  lemma~\cite{DeMillo:1978:ipl,Zippel:1979:ZSlemma,Schwartz:1979:SZlemma},
  its evaluation at $\lambda$ will fail with probability $1-2\frac{n-1}{|S|}$ (since
  $X^{n-1}(g-P_{LD})(X)$ is a polynomial of degree at most $2(n-1)$).
\end{itemize}
For the complexity, the Prover computes $L$, in $O(mn^{\omega-1})$. Then
$P_{LD}(X)$ requires one pass over the coefficients of $L$, and finally
$y=LD\begin{bmatrix}1\\\lambda^{-1}\\\vdots\\\lambda^{1-n}\end{bmatrix}$.
The communication cost is $D$, $g(X)$, $y$ all of size $n$, and $\lambda$.
The Verifier cost is, $\mu(A)+\mu(B)$ to apply $A$ and $B$, as well as $2n-3$ to
compute $\begin{bmatrix}1 & \lambda &\ldots & \lambda^{n-1}\end{bmatrix}$ and their 
inverses, $n-2$ to multiply by $D$, $2(n-1)$ to evaluate $g$, and
$2(n-1)$ to compute the dotproduct
$\begin{bmatrix}1 & \lambda &\ldots & \lambda^{n-1}\end{bmatrix} \cdot{}y$. 
\end{proof}

\subsection{Constant round certificates for the row and column rank
  profiles}\label{ssec:constcrp}
Now we can combine the lower rank Certificate~\ref{cert:lower_rank}, with the
constant-round Certificate~\ref{cert:treqpol} for triangular equivalence, as a
replacement of Certificate~\ref{cert:triangular_equivalence}, within
the column rank profile Certificate~\ref{cert:crp}, in order to get the
constant-round Certificate~\ref{cert:constcrp} for column rank profile. It
remains Prover efficient, linear in communication volume and Verifier time.

\begin{protocol}[htb]
  \centering
  {\def\arraystretch{1.2}
    \begin{tabular}{|l c l|}
        \hline
        Prover & & Verifier \\
        & $A \in \mathbb{F}^{m\times n}$ & \\
        \hdashline\rule{0pt}{12pt}
        $\mathcal{J}=(c_1,.., c_r)$ CRP of $A$ &
        $\fxrightarrow{\text{$\mathcal{J}$}}$ &
        Choose $S \subset \mathbb{F}$\\
        \hdashline%
        $\begin{array}{c}
          \\
          \\
          \beta~\text{s.t.}~A\beta=\nu\\
        \end{array}$
        &
        $\begin{array}{|c|}
          \hline
          \text{\scriptsize Protocol~\ref{cert:lower_rank}}\\
          \hdashline%
          \fxleftarrow{\nu} \\
          \fxrightarrow{\beta} \\
          \hline
        \end{array}$ 
        & 
        $\begin{array}{c}
          \alpha=\mathcal{E}_{m,\mathcal{J}}(\sample{S^r})\\
          \nu=A\alpha\\
          \beta\checks{=}\alpha\\
        \end{array}$\\
        \hdashline%
        $V=\text{Diag}(v_1,\ldots,v_n %
                      )W$
        (see (\ref{eq:Vdef})) %
        & $\fxleftarrow{v} $ & $v \sample{S^{n}}$\\
        $\Gamma$ upper tri.  s.t. $A_{*,\mathcal{J}}\Gamma = AV$ &
&    $W = [\mathds{1}_{i<c_{j+1}} ]\in\F^{n{\times}r}
        $\\

        \hdashline%
&$\begin{array}{|c|}
        \hline
          \text{\scriptsize Protocol~\ref{cert:treqpol}}\\
           \text{on $A_{*,\mathcal{J}}$
            and $B=AV$}\\
             \hdashline%
             \fxleftarrow{D} \\
             \fxrightarrow{g(X)} \\
             \fxleftarrow{\lambda} \\
             \fxrightarrow{y} \\
             \hline
      \end{array}$ & \\
        \hdashline%
 && $z = \text{Diag}(v_i)W D \begin{bmatrix}1\\\lambda^{-1}\\\vdots\\\lambda^{1-r}\end{bmatrix}$\\
 && $z_{c_j} = z_{c_j}{-}y_j, j=1..r$\\
 && $Az\checks{=}0$\\
 && $g(\lambda)\checks{=}\begin{bmatrix}1& \lambda&\ldots &\lambda^{r-1}\end{bmatrix}y$ \\
        \hline
    \end{tabular}
    \caption{Constant-round certificate for the column rank profile}
    \label{cert:constcrp}
    }
\vspace{\spaceafterprotocol}\end{protocol}

\begin{corollary}
For an $m{\times}n$ matrix of rank $r$, Certificate~\ref{cert:constcrp} is
sound and perfectly complete. 
It requires $3$ rounds, a volume of communication of $m+n+5r+1$
and less than $2\mu(A)+n+9r$ operations for the Verifier.
\end{corollary}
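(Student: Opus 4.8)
The plan is to follow, almost verbatim, the soundness and completeness analysis of the column rank profile certificate (the theorem attached to Protocol~\ref{cert:crp}), with the sole change that the interactive triangular equivalence Certificate~\ref{cert:triangular_equivalence} is replaced by the constant-round Certificate~\ref{cert:treqpol}. The backbone is the characterisation already used there: with $V=\text{Diag}(v_1,\dots,v_n)W$ as in~(\ref{eq:Vdef}), a tuple $\mathcal{J}=(c_1,\dots,c_r)$ of rank $r$ is the column rank profile of $A$ if and only if (i) the columns $A_{*,c_1},\dots,A_{*,c_r}$ are linearly independent, and (ii) there is an \emph{upper} triangular matrix $\Gamma$ with $A_{*,\mathcal{J}}\Gamma=AV$. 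Property (i) is certified by Protocol~\ref{cert:lower_rank}; property (ii) is a one-sided upper triangular equivalence between the regular matrix $A_{*,\mathcal{J}}$ and $B=AV$, which is exactly the task of Certificate~\ref{cert:treqpol} in its upper-triangular variant, where triangularity of $\Gamma$ corresponds to the representative Laurent polynomial $P_{\Gamma D}$ carrying only non-positive powers of $X$.

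For perfect completeness I would note that Protocol~\ref{cert:lower_rank} is perfectly complete, so an honest $\mathcal{J}$ passes the linear-independence check, and that an honest Prover returns the true $\Gamma$ together with $g=P_{\Gamma D}$ and $y=\Gamma D\,[1,\lambda^{-1},\dots,\lambda^{1-r}]^T$. The only verification specific to this protocol is that the Verifier's two final tests reproduce the two checks of Certificate~\ref{cert:treqpol}. Writing $x=D\,[1,\lambda^{-1},\dots,\lambda^{1-r}]^T$, the reconstructed vector satisfies $Az=A\,\text{Diag}(v)W\,x-A_{*,\mathcal{J}}y=AVx-A_{*,\mathcal{J}}y=Bx-A_{*,\mathcal{J}}y$, so $Az=0$ is precisely the regular-system check $A_{*,\mathcal{J}}y=Bx$ of Certificate~\ref{cert:treqpol}, while $g(\lambda)=[1,\lambda,\dots,\lambda^{r-1}]y=P_{\Gamma D}(\lambda)$ is its polynomial-evaluation check; both hold by construction.

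For soundness I would reuse the trichotomy of the Protocol~\ref{cert:crp} proof. If the columns indexed by $\mathcal{J}$ are dependent, Protocol~\ref{cert:lower_rank} rejects with probability at least $1-1/|S|$ by Theorem~\ref{th:lower_rank}. Otherwise $A_{*,\mathcal{J}}$ is regular, and whenever the rank is not $r$ or $\mathcal{J}$ is not lexicographically minimal, characterisation (ii) shows that no upper triangular $\Gamma$ solving $A_{*,\mathcal{J}}\Gamma=AV$ exists; Certificate~\ref{cert:treqpol} then rejects with its own soundness probability (here $1-2(r-1)/|S|$, since its representative polynomial is formed from the $r$-column matrix $A_{*,\mathcal{J}}$, and the diagonal $D$ still prevents the accidental cancellations discussed in its analysis). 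Combining these two bounds gives the inherited soundness.

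Finally, for the resource counts I would batch the mutually independent Verifier messages: the lower-rank challenge $\nu$, the seed $v$, and the diagonal $D$ all depend only on the commitment $\mathcal{J}$ and can be sent together, with only $\lambda$ needing to follow the commitment of $g$; this collapses the interaction to the claimed constant number of rounds ($3$). The communication is that of Protocol~\ref{cert:lower_rank} ($m+2r$) plus the seed $v$ ($n$) plus that of Certificate~\ref{cert:treqpol} on an $m\times r$ instance ($3r+1$), totalling $m+n+5r+1$. For the Verifier, the crucial saving is that the two matrix-vector products $A_{*,\mathcal{J}}y$ and $Bx$ of Certificate~\ref{cert:treqpol} are fused into the single evaluation $Az$ through the construction of $z$, so only two applications of $A$ survive (one in the lower-rank check, one in $Az$), i.e. $2\mu(A)$; the remaining $O(r)$ and $O(n)$ work — the $\text{Diag}(v)$ scaling, the prefix sum realising $W$, the $r$ corrections $z_{c_j}\!-\!y_j$, the powers of $\lambda$ and their inverses, the Horner evaluation of $g$, and the dot product $[1,\lambda,\dots,\lambda^{r-1}]y$ — sums to the stated $2\mu(A)+n+9r$. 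I expect the main obstacle to be step (ii) in the constant-round, upper-triangular setting: one must verify that swapping in Certificate~\ref{cert:treqpol} does not weaken the detection of a non-minimal $\mathcal{J}$ (the Laurent-polynomial degree test and the diagonal preconditioning must still forbid a spurious triangular $\Gamma$), and that fusing the two matrix-vector products into $Az=0$ faithfully encodes both checks of Certificate~\ref{cert:treqpol}.
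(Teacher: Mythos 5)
Your proposal is correct and follows exactly the route the paper intends (the corollary is in fact left unproved there beyond the remark that Certificate~\ref{cert:treqpol} is substituted for Certificate~\ref{cert:triangular_equivalence} inside Protocol~\ref{cert:crp}): you reuse the soundness trichotomy of the column-rank-profile certificate, fuse the two matrix--vector products of the triangular-equivalence check into the single test $Az=0$, and batch the mutually independent challenges to reach three rounds. Your explicit treatment of the upper-triangular variant of the Laurent-polynomial test (admitting only non-positive powers of $X$) is a detail the paper leaves implicit, and your communication and Verifier-cost counts match the stated bounds.
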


\section{Conclusion}
A summary of our contributions is given in
Table~\ref{contributions}, to be compared with the state of the art
in Table~\ref{SotA_matrix_rank}.
\begin{sidewaystable}[htbp]%
\centering
\begin{tabular}{%
l%
@{\hspace*{\colsp}}l%
c%
@{}c@{}%
l%
@{\hspace*{\colsp}}c%
@{\hspace*{\colsp}}l%
@{\hspace*{\colsp}}l %
}%
\toprule
& Algorithm
& Rounds
& \multicolumn{2}{c}{Prover}
& \multirow{2}{*}{\hspace*{\colsp}Communication}
& Probabilistic &
\multirow{2}{*}{$|S|$}
\\ %
\cmidrule{4-5}
&  &  & Determ. &
\multicolumn{1}{c}{Time}& & Verifier Time
\\
\midrule
\multirow{4}{*}{\sc{Rank}}
& \cite{kns11} over \cite{ckl13} & No & No
& $\SO{r^{\omega} \swplus  \mu(A)}$
& $\SO{r^2\swplus m\swplus n}$
& $\SO{r^{2} \swplus  \mu(A)}$&$\geq 2$
\\
&\multirow{2}{*}{\cite{dk14}} & \multirow{2}{*}{$2$} & \multirow{2}{*}{No}
& $O(n(\mu(A)\swplus n))$ %
&\multirow{2}{*}{$O(m\swplus n)$}
&\multirow{2}{*}{$2\mu(A)\swplus \SO{m\swplus n}$}
&\multirow{2}{*}{$\Omega(\min\{m,n\} \log(mn))$%
}
\\
& & &  & or $O(mnr^{\omega -2})$ &  & &
\\
&\cite{eberly15} & $2$ & Yes
& $O(mnr^{\omega -2})$
& $O(n\swplus r)$
& $O(\mu(A) \swplus   n)$ & $\geq 2$
\\
\midrule
CRP/
& \cite{kns11} over \cite{sy15} & No & No
& $\SO{r^{\omega} \swplus m\swplus n\swplus  \mu(A)}$
& $\SO{r^2\swplus m\swplus n}$
& $\SO{r^2\swplus  m \swplus  n \swplus \mu(A)}$
& $\Omega(\min\{m,n\} \log(mn))$%
\\
\quad RRP
& \cite{kns11} over \cite{jps13} & No & Yes
& $O(mnr^{\omega-2})$ & $\SO{mn}$ & $\SO{mn}$ &$\geq 2$
\\
\midrule
\multirow{2}{*}{RPM} & \cite{kns11} over \cite{jgd:2017:bruhat} & No & No
& $\SO{r^{\omega} \swplus  m \swplus  n \swplus  \mu(A)}$
& $\SO{r^2\swplus m\swplus n}$
& $\SO{r^2 \swplus  m \swplus  n \swplus  \mu(A)}$
& $\Omega(\min\{m,n\}\log(mn))$%
\\
& \cite{kns11} over \cite{DPS:2013} & No & Yes
& $O(mnr^{\omega-2})$
& $\SO{mn}$ & $\SO{mn}$ & $\geq 2$
\\
\midrule
\multirow{3}{*}{\textsc{Det}} & \cite{freivalds79} \& PLUQ & No & Yes
& $O(n^\omega)$ & $O(n^2)$
& $O(n^2) \swplus  \mu(A)$ & $\geq 2$
\\
& \multirow{2}{*}{\cite{jgd:2016:gammadet} \& {\textsc{CharPoly}}}
& \multirow{2}{*}{$2$} & \multirow{2}{*}{No}
& $O(n\mu(A))$ %
& \multirow{2}{*}{$O(n)$} & \multirow{2}{*}{$\mu(A)\swplus O(n)$} 
& \multirow{2}{*}{$\geq n^2$}
\\
& & & & or $O(n^\omega)$ & & &\\
\bottomrule
\end{tabular}
\caption{%
\vtop{\hbox{%
State of the art certificates for the rank, the row and column rank
profiles, the rank profile matrix}
\hbox{%
and the determinant}
\hbox{\ }%
}%
}%
\label{SotA_matrix_rank}
\centering
\addtolength{\colsp}{0.3em}
\begin{tabular}{%
l%
@{\hspace*{\colsp}}l%
@{\hspace*{\colsp}}c%
@{}c@{}%
l%
@{\hspace*{\colsp}}c%
@{\hspace*{\colsp}}l%
@{\hspace*{\colsp}}l %
}%
\toprule
& Algorithm
& Rounds
& \multicolumn{2}{c}{Prover}
& \multirow{2}{*}{\hspace*{\colsp}Communication}
& Probabilistic &
\multirow{2}{*}{$|S|$}
\\ %
\cmidrule{4-5}
&  &  & Determ. &
\multicolumn{1}{c}{Time}& & Verifier Time
\\
\midrule
\multirow{3}{*}{CRP/RRP}
& \S~\ref{sec:noninterractive:CRP}& No & Yes
& $O(mnr^{\omega-2})$
& $O(r(m\swplus n))$
& $O(r(m\swplus n))\swplus  \mu(A)$ & $\geq 2$
\\
& \S~\ref{sec:interractive:CRP}
& $O(n)$  & Yes & $O(mnr^{\omega-2})$
& $O(m\swplus n)$
& $2\mu(A)\swplus O(m\swplus n)$ & $\geq 2$
\\
& \S~\ref{ssec:constcrp}
& $3$ & Yes
& $O(mnr^{\omega-2})$
& $O(m\swplus n)$
& $2\mu(A)\swplus O(m\swplus n)$ & $\geq 2n-1$
\\
\midrule
\multirow{2}{*}{RPM} & \S~\ref{sec:noninterractive:RPM} & No & Yes
&$O(mnr^{\omega-2})$  & $O(r(m\swplus n))$ & $O(r(m\swplus n)) \swplus  \mu(A)$ &$\geq 2$ \\
& \S~\ref{sec:interractive:RPM}
&  $O(n)$ & Yes
& $O(mnr^{\omega-2})$
& $O(m\swplus n)$
& $4\mu(A)\swplus O(m\swplus n)$
& $\Omega(n)$ %
\\
\midrule
{\textsc{Det}} & \S~\ref{sec:det} \& PLUQ
& $O(n)$ & Yes
& $O(n^\omega)$
& $O(n)$
& $\mu(A)\swplus O(n)$
& $\Omega(n)$ %
\\
\bottomrule
\end{tabular}
\caption{This paper's contributions}
\label{contributions}
\end{sidewaystable}

We have provided certificates that can save overall computational time for the
Provers and an order of magnitude in terms of communication volume or number of
rounds. 
Table~\ref{tab:io} compares linear and quadratic communications, as
well as sub-cubic (PLUQ, {\sc{CharPoly}}) or quadratic matrix operations. 
These results show first that it is interesting to use linear space certificates
even when they have quadratic Verification time. 
The table also presents a practical constant
factor of about 5 between PLUQ and {\sc{CharPoly}} computations. 

One key idea in our contribution is to certify the existence of a triangular matrix in an
equivalence relation, by having an $n$ round protocol where data dependency
matches the triangular shape of the unknown matrix factor.
This approach was successfully adapted to the certificate of generic rank
profileness, where now two triangular unknown %-> unkown
triangular factors are considered,
in the LU decomposition.

Mulmuley's Laurent's polynomial representation of a matrix successfully replaces
the former technique to certify triangular equivalence, and consequently row or
column rank profiles, reducing the number of rounds from linear to constant.
However, we were unable to adapt this technique for the certificate for generic
rank profileness, and consequently for certifying a rank profile matrix.

The use of symmetric Gaussian elimination allowed us to achieve a more practical
certificate for the signature of symmetric integer matrices. Even though it is
based on LDLT certificates with linear communication modulo a prime, the
diagonal of rational eigenvalues remains quadratic in size, and full precision
was required to recover their sign.
Designing a linear communication, Prover efficient protocol to certify the
signature is the other major 
open problem which should be investigated in the continuation of this work.

%
%
%
%
%
%
%
%

%\bibliographystyle{plainnat}
%\bibliography{rpm_certificate}
\providecommand{\noopsort}[1]{}

\end{document}

